\newlist{abbrv}{itemize}{1}
\setlist[abbrv,1]{label=,labelwidth=1in,align=parleft,itemsep=0.1\baselineskip,leftmargin=!}
\newtheorem{thm}{Theorem}
\newtheorem{cor}{Corollary}
\newtheorem{defn}{Definition}
\begin{document}

\begin{frontmatter}

\begin{fmbox}
\dochead{Research}

\title{Robust Distributed Cooperative RSS-based Localization for Directed Graphs in Mixed LoS/NLoS Environments}

\author[
   addressref={aff1},                   
   email={luca.carlino@unisalento.it}   
]{\inits{}\fnm{Luca} \snm{Carlino}}
\author[
   addressref={aff2},
   corref={aff2},                       
   noteref={n1},                        
   email={djin@spg.tu-darmstadt.de}
]{\inits{DJ}\fnm{Di} \snm{Jin}}
\author[
   addressref={aff2},
   email={muma@spg.tu-darmstadt.de}
]{\inits{MM}\fnm{Michael} \snm{Muma}}
\author[
   addressref={aff2},
   email={zoubir@spg.tu-darmstadt.de}
]{\inits{AMZ}\fnm{Abdelhak M.} \snm{Zoubir}}


\address[id=aff1]{
  \orgname{Department of Innovation Engineering, University of Salento}, 
  \city{Lecce},                              
  \cny{Italy}                                    
}
\address[id=aff2]{%
  \orgname{Department of Electrical Engineering and Information Technology, Technische Universit$\rm{\ddot{a}}$t Darmstadt},
  \city{Darmstadt},
  \cny{Germany}
}


\begin{artnotes}
\note[id=n1]{Corresponding author} 
\end{artnotes}



\begin{abstractbox}

\begin{abstract} 
The accurate and low-cost localization of sensors using a wireless sensor network is critically required in a wide range of today's applications. We propose a novel, robust maximum likelihood-type method for distributed cooperative received signal strength-based localization in wireless sensor networks. To cope with mixed LoS/NLoS conditions, we model the measurements using a two-component Gaussian mixture model. The relevant channel parameters, including the reference path loss, the path loss exponent and the variance of the measurement error, for both LoS and NLoS conditions, are assumed to be unknown deterministic parameters and are adaptively estimated. Unlike existing algorithms, the proposed method naturally takes into account the (possible) asymmetry of links between nodes. The proposed approach has a communication overhead upper-bounded by a quadratic function of the number of nodes and computational complexity scaling linearly with it. The convergence of the proposed method is guaranteed for compatible network graphs and compatibility can be tested a priori by restating the problem as a graph coloring problem. Simulation results, carried out in comparison to a centralized benchmark algorithm, demonstrate the good overall performance and high robustness in mixed LoS/NLoS environments.
\end{abstract}


\begin{keyword}
\kwd{cooperative localization}
\kwd{received signal strength (RSS)}
\kwd{maximum likelihood estimation}
\kwd{wireless sensor network (WSN)}
\end{keyword}

\end{abstractbox}
\end{fmbox}

\end{frontmatter}

\section{Introduction}
The wide spread of telecommunication systems has led to the pervasiveness of radiofrequency (RF) signals in almost every environment of daily life. Knowledge of the location of mobile devices is required or beneficial in many applications \cite{Gustafsson_SPM}, and numerous localization techniques have been proposed over the years~\cite{Gustafsson_SPM, Patwari_SPM, Gezici_survey, Trevisan2013}. Techniques based on the received signal strength (RSS) are the preferred option when low cost, simplicity and technology obliviousness are the main requirements. In some standards, e.g.~IEEE 802.15.4, an RSS indicator (RSSI) is encoded directly into the protocol stack~\cite{CR_radio}. In addition, RSS is readily available from any radio interface through a simple energy detector and can be modeled by the well-known path loss model~\cite{Rappaport} regardless of the particular communication scheme. Based on that, RSS can be exploited to implement ``opportunistic'' localization for different wireless technologies, e.g.~Wi-Fi~\cite{localizing_wifi}, FM radio~\cite{FM_radio} or cellular networks~\cite{loc_cellular}. In the context of wireless sensor networks (WSNs), nodes with known positions (\emph{anchors}) can be used to localize the nodes with unknown positions (\emph{agents}). Generally speaking, localization algorithms can be classified according to three important categories.

\emph{Centralized vs.~Distributed.} Centralized algorithms, e.g.~\cite{rss_em_centralized, Toma_ICASSP14, BCR-TSP-2015, CAMSAP2016, FUSION2016}, require a data fusion center that carries out the computation after collecting information from the nodes, while distributed algorithms, such as~\cite{tomic_rssaoa_d, rssi_distr_agri}, rely on self-localization and the computation is spread throughout the network. Centralized algorithms are likely to provide more accurate estimates, but they suffer from scalability problems, especially for large-scale WSNs, while distributed algorithms have the advantage of being scalable and more robust to node failures~\cite{Handbook}.

\emph{Cooperative vs.~Non-cooperative.} In a non-cooperative algorithm, e.g.~\cite{Yue_noncoop}, each agent receives information only from anchors. For all agents to obtain sufficient information to perform localization, non-cooperative algorithms necessitate either long range (and high-power) anchor transmission or a high-density of anchors~\cite{Handbook}. In cooperative algorithms, such as~\cite{Di_quantized, tomic_TVT}, inter-agent communication removes the need for all agents to be in range of one (or more) anchors~\cite{Handbook}.

\emph{Bayesian vs.~Non-Bayesian.} In non-Bayesian algorithms, e.g.~Expectation-Maximization (EM) \cite{YinEM}, and its variant, Expectation-Conditional Maximization (ECM)~\cite{YinECM}, the unknown positions are treated as deterministic, while in Bayesian algorithms, e.g.~Nonparametric Belief Propagation (NBP)~\cite{nbploc}, Sum-Product Algorithm over Wireless Networks (SPAWN)\\~\cite{SPAWN,SPAWN-FFT} and its variant Sigma-Point SPAWN~\cite{SPAWN-SP}, the unknown positions are assumed to be random variables with a known prior distribution.


Many existing works on localization using RSS measures, such as \cite{Patwari2003, Ouyang2010}, are based on the assumption that the classical path-loss propagation model is perfectly known, mostly via a calibration process. However, this assumption is impractical for two reasons. Firstly, conducting the calibration process requires intensive human assistance, which may be not affordable, or may even be impossible in some inaccessible areas. Secondly, the channel characteristics vary due to multipath (fading), and non-negligible modifications occur also due to mid-to-long term changes in the environment, leading to non-stationary channel parameters~\cite{infocom06}. This implies that the calibration must be performed continuously~\cite{infocom06, CR_ISWPC} because otherwise the resulting mismatch between design assumptions and actual operating conditions leads to severe performance degradation. These facts highlight the need for algorithms that \emph{adaptively} estimate the environment and the locations.
A further difficulty is due to the existence of Non-Line-of-Sight (NLoS) propagation in practical localization environments. Among various works handling the NLoS effect, a majority of them have treated the NLoS meaures as outliers and tried to neglect or mitigate their effect, including the Maximum Likelihood (ML)-based approach \cite{Riba2004,Gezici2004}, the Weighted Least-Squares (WLS) estimator \cite{Caffery1998, Gezici2004}, the constrained localization techniques \cite{Venkatesh2007,Wang2003}, robust estimators \cite{Casas2006,Sun2004} and the method based on virtual stations \cite{Liu2017}. In contrast to these works, several approaches, including \cite{Yin2013, YinEM, YinECM}, have proposed specific probabilistic models for the NLoS measures, therewith exploiting the NLoS measures for the localization purpose. In the light of these considerations, our aim is to develop an RSS-based, cooperative localization framework that works in mixed LoS/NLoS environments, requires no knowledge on parameters of the propagation model, and can be realized in a distributed manner.

\begin{table*}[tbp]
\centering
\begin{tabular}{l *{7}{c}}
\toprule
\textbf{\footnotesize{Paper}}			& \textbf{\footnotesize{RSS-based}} & \textbf{\footnotesize{Cooperative}} & \textbf{\footnotesize{Calibration-Free}}  & \textbf{\footnotesize{LoS/NLoS}}        & \textbf{\footnotesize{Distributed}}
\\ \midrule
NBP~\cite{nbploc}	& No & Yes & N/A & No  & Yes\\
SPAWN~\cite{SPAWN}  & No & Yes & N/A & No  & Yes\\
%
%
%
D-ECM~\cite{YinECM}  & No & Yes & N/A & Yes & Yes\\
\midrule
SDP~\cite{tomic_TVT} & Yes	& Yes & Yes	& No & No \\
EM~\cite{YinEM}	     & Yes & No   & Yes	& Yes & N/A\\
Proposed work		& Yes & Yes	& Yes & Yes  & Yes\\
\bottomrule
\end{tabular}
\caption{Succinct characterization of the related works and the proposed work. The entry ``$N/A$'' stands for ``Not applicable''. The terms ``RSS-based'', ``Cooperative'', ``Calibration-Free'', ``LoS/NLoS'', and ``Distributed'' are defined and discussed in Section 1.}
\label{tab:papers}
\end{table*}

The following distinction is made: only algorithms that directly use RSS measures as inputs are considered \emph{RSS-based} in the strict sense, while algorithms such as NBP~\cite{nbploc}, SPAWN~\cite{SPAWN}, distributed-ECM (D-ECM)~\cite{YinECM} and their variants are here called \emph{``pseudo-RSS-based''},  because they use range estimates as inputs. Generally speaking, among these two options, RSS-based location estimators are preferred for the following reasons. Firstly, inferring the range estimates from the RSS measures usually requires knowledge on the (estimated) propagation model parameters. The assumption of a priori known parameters violates the calibration-free requirement. Secondly, even with perfectly known model parameters, there exists no efficient estimator for estimating the ranges from the RSS measures, as proven in~\cite{chitte}. Thirdly, dropping the idealistic assumption of known channel parameters and using their estimates introduces an irremovable large bias, as demonstrated in~\cite{Coluccia_bias}.
Based on these considerations, pseudo-RSS-based approaches do not meet the requirements in this work. Furthermore,  Bayesian approaches, including NBP~\cite{nbploc} and SPAWN~\cite{SPAWN,SPAWN-FFT}, do not consider mixed LoS/NLoS environments. 
As one representative RSS-based cooperative localization algorithm, Tomic \emph{et al.}'s semidefinite-programming (SDP) estimator in \cite{tomic_TVT} requires no knowledge on the propagation model, but it does not apply to and cannot be readily extended to a mixed LoS/NLoS environment. To the best of our knowledge, the existing works on RSS-based calibration-free localization in a mixed LoS/NLoS environment, is rather limited. Yin \emph{et al.} have proposed an EM-based estimator in \cite{YinEM}, but only for the single agent case. In this paper, we consider a multi-agent case and aim to develop a location estimator that is RSS-based, cooperative, calibration-free and works in a mixed LoS/NLoS environment. To capture the mixed LoS/NLoS propagation conditions, we adopt the mode-dependent propagation model in \cite{YinEM}. The key difference between this work and \cite{YinEM} lies in whether the localization environment is cooperative or not. More precisely, \cite{YinEM} is concerned with the conventional single agent localization while this work studies cooperative localization in case of multi-agent. Furthermore, we develop a distributed algorithm, where model parameters and positions are updated locally by treating the estimated agents as anchors, inspired by the works in \cite{Savarese2001,Savarese2002,Savvides2001}. A succinct characterization of the proposed work and its related works is listed in Tab.~\ref{tab:papers}.

{\it Original Contributions:} We address the problem of RSS-based cooperative localization in a mixed LoS/NLoS propagation environment, requiring no calibration. To characterize such a mixed LoS/NLoS environment, we assume a mode-dependent propagation model with unknown parameters. We derive and analyze a robust, calibration-free, RSS-based distributed cooperative algorithm, based on the ML framework, which is capable of coping with mixed LoS/NLoS conditions. Simulation results, carried out in comparison with a centralized ML algorithm that serves as a benchmark, show that the proposed approach has good overall performance. Moreover, it adaptively estimates the channel parameters, has acceptable communication overhead and computation costs, thus satisfying the major requirements of a practically viable localization algorithm. The convergence analysis of the proposed algorithm is conducted by restating the problem as a graph coloring problem. In particular, we formulate a graph compatibility test, and show that for compatible network structures, the convergence is guaranteed. Unlike existing algorithms, the proposed method naturally takes into account the (possible) asymmetry of links between nodes.

The paper is organized as follows. Section 2 formulates the problem and details the algorithms. Section 3 discusses convergence. Section 4 presents simulations results, while Section 5 concludes the paper. Finally, Appendix A and Appendix B contain some analytical derivations which would otherwise burden the reading of the paper.

\section{Methods/Experimental}
\subsection{Problem Formulation}
Consider\footnote{Throughout the paper, vectors and matrices will be denoted in bold, $\| \bm{v} \|$ denotes the Euclidean norm of vector $\bm{v}$, $| \mathcal{A} |$ denotes the cardinality of set $\mathcal{A}$. We denote by $\mathbb{E}[X]$ and $\mathrm{Var}[X]$ the statistical expectation and variance, respectively, of random variable $X$. Finally, $\mathbb{B} = \{0,1\}$ is the Boolean set.} a directed graph with $N_a$ anchor nodes and $N_u$ agent nodes, for a total of $N = N_a + N_u$ nodes. In a $2$-dimensional ($2$D) scenario, we denote the position of node $i$ by $\bm{x}_i = [x_i \ y_i]^\top \in \mathbb{R}^{2 \times 1}$, where $ ^\top$ denotes transpose. Between two distinct nodes $i$ and $j$, the binary variable $o_{j \rightarrow i}$ indicates if a measure, onto direction $j \rightarrow i$, is observed ($o_{j \rightarrow i} = 1$) or not ($o_{j \rightarrow i} = 0$). In the case $i=j$, since a node does not self-measure, we have $o_{i \rightarrow i}=0$. This allows us to define the observation matrix $\bm{\mathcal{O}} \in \mathbb{B}^{N \times N}$ with elements $o_{i,j} \triangleq o_{i \rightarrow j}$ as above. The aforementioned directed graph has connection matrix $\bm{\mathcal{O}}$. It is important to remark that, for a directed graph, $\bm{\mathcal{O}}$ is not necessarily symmetric; physically, this models possible channel anisotropies, miss-detections and, more generally, link failures. Let $m_{j \rightarrow i}$ be a binary variable, which denotes if the link $j \rightarrow i$ is LoS ($m_{j \rightarrow i} = 1$) or NLoS ($m_{j \rightarrow i} = 0$). Due to physical reasons, $m_{j \rightarrow i} = m_{i \rightarrow j}$. We define the LoS/NLoS Matrix\footnote{The values on the main diagonal are arbitrary. Here we choose $m_{i \rightarrow i} = 1$.} $\bm{L} \in \mathbb{B}^{N \times N}$ of elements $l_{i,j} \triangleq m_{i \rightarrow j}$, and we observe that, since $m_{j \rightarrow i} = m_{i \rightarrow j}$, the matrix is symmetric, i.e.,  $\bm{L}^\top = \bm{L}$. We stress that this symmetry is preserved regardless of $\bm{\mathcal{O}}$, as it derives from physical reasons only. Let $\Gamma(i)$ be the \emph{(open) neighborhood} of node $i$, i.e., the set of all nodes from which node $i$ receives observables (RSS measures), formally: $\Gamma(i) \triangleq \{ j \neq i : o_{j \rightarrow i}=1 \}$. We define $\Gamma_a(i)$ as the anchor-neighborhood of node $i$. We also define $\Gamma_u(i)$ as the agent-neighborhood of node $i$, i.e., the subset of $\Gamma(i)$ which contains only agent nodes as neighbors of node $i$. In general, $\Gamma(i) = \Gamma_a(i) \cup \Gamma_u(i)$.

\subsection{Data Model}
In the sequel, we will assume that all nodes are stationary and that the observation time-window is sufficiently short in order to neglect correlation in the shadowing terms. In practice, such a model simplification allows for a more analytical treatment of the localization problem and has also been used, for example, in ~\cite{BCR-TSP-2015, Toma_ICASSP14}.
Following the path loss model and the data models present in the literature~\cite{Handbook, YinEM}, and denoting by $K$ the number of samples collected on each link over a predetermined time window, we model the received power at time index $k$ for anchor-agent links as
\begin{equation}
\label{eq:o1}
r^{(m)}_{a \rightarrow i}(k) = \begin{cases} p_{0_{\mbox{\tiny LOS}}} \hspace{-1mm} - 10  \alpha_{\mbox{\tiny LOS}} \log_{10} \| \bm{x}_a - \bm{x}_i \| + w_{a \rightarrow i}(k), 
\\ \quad \quad \text{if} \ m_{a \rightarrow i}=1; \\
p_{0_{\mbox{\tiny NLOS}}} \hspace{-2mm} - \hspace{-1mm} 10 \alpha_{\mbox{\tiny NLOS}} \log_{10} \| \bm{x}_a - \bm{x}_i \| + v_{a \rightarrow i}(k),\\
\quad \quad \text{if} \ m_{a \rightarrow i}=0,
\end{cases}
\end{equation}
while, for the agent-agent link,
\begin{equation}
\label{eq:o2}
r^{(m)}_{u \rightarrow i}(k)\hspace{-1mm} = \hspace{-1mm}\begin{cases} p_{0_{\mbox{\tiny LOS}}} \hspace{-1mm} - \hspace{-1mm} 10 \alpha_{\mbox{\tiny LOS}} \log_{10} \| \bm{x}_u - \bm{x}_i \| + w_{u \rightarrow i}(k) ,
\\ \quad \quad  \text{if} \ m_{u \rightarrow i}=1; \\
p_{0_{\mbox{\tiny NLOS}}} \hspace{-1mm} - \hspace{-1mm} 10 \alpha_{\mbox{\tiny NLOS}} \log_{10} \| \bm{x}_u - \bm{x}_i \| + v_{u \rightarrow i}(k) , \\ \quad \quad \text{if} \ m_{u \rightarrow i}=0 ,
\end{cases}
\end{equation}
where:
\begin{itemize}
   \item $i,u$, with $u \in \Gamma_u(i)$, are indexes for the unknown nodes;
   \item $a \in \Gamma_a(i)$ is an index for anchors;
   \item $k=1,\dots,K$ is the discrete time index, with $K$ samples for each link;
   \item $p_{0_{\mbox{\tiny LOS/NLOS}}}$ is the reference power (in dBm) for the LoS or NLoS case;
   \item $\alpha_{\mbox{\tiny LOS/NLOS}}$ is the path loss exponent for the LoS or NLoS case;
   \item $\bm{x}_a$ is the known position of anchor $a$;
   \item $\bm{x}_u$ is the unknown position of agent $u$ (similarly for $\bm{x}_i$);
   \item $\Gamma_a(i)$, $\Gamma_u(i)$ are the anchor- and agent-neighborhoods of node $i$, respectively;
   \item the noise terms $w_{a \rightarrow i}(k), v_{a \rightarrow i}(k), w_{u \rightarrow i}(k), v_{u \rightarrow i}(k)$ are modeled as serially independent and identically distributed (i.i.d.), zero-mean, Gaussian random variables, independent from each other (see below), with variances:\\ $\mathrm{Var} [ w_{a \rightarrow i}(k) ] = \mathrm{Var} [ w_{u \rightarrow i}(k) ] = \sigma^2_{\mbox{\tiny LOS}}$,\\ $\mathrm{Var}[v_{a \rightarrow i}(k)] = \mathrm{Var}[v_{u \rightarrow i}(k)] = \sigma^2_{\mbox{\tiny NLOS}}$,\\ and $\sigma^2_{\mbox{\tiny NLOS}} > \sigma^2_{\mbox{\tiny LOS}} > 0$.
\end{itemize}

\noindent More precisely, letting $\delta_{i,j}$ be Kronecker's delta\footnote{$\delta_{i,j} = 1$ if and only if $i=j$, zero otherwise.}, the independence assumption is formalized by the following equations
\begin{equation}
\begin{aligned}
&\mathbb{E}[w_{j_1 \rightarrow i_1}(k_1) w_{j_2 \rightarrow i_2}(k_2) ] = \sigma^2_{\mbox{\tiny LOS}} \delta_{k_1, k_2} \delta_{i_1, i_2} \delta_{j_1, j_2} \\
&\mathbb{E}[v_{j_1 \rightarrow i_1}(k_1) v_{j_2 \rightarrow i_2}(k_2) ] \ = \sigma^2_{\mbox{\tiny NLOS}} \delta_{k_1, k_2} \delta_{i_1, i_2} \delta_{j_1, j_2} \\
&\mathbb{E}[w_{j_1 \rightarrow i_1}(k_1) v_{j_2 \rightarrow i_2}(k_2) ] = 0
\end{aligned}
\end{equation}
for any $k_1, k_2, i_1, i_2, j_1 \in \Gamma(i_1), j_2 \in \Gamma(i_2)$. The previous equations imply that two different links are always independent, regardless of the considered time instant. In this paper, we call this property \emph{link independence}. If only one link is considered, i.e. $j_2 = j_1$ and $i_2 = i_1$, then independence is preserved by choosing different time instants, implying that the sequence $\{ w_{j \rightarrow i} \}_{k} \triangleq \{ w_{j \rightarrow i}(1), w_{j \rightarrow i}(2), \dots \}$ is white. The same reasoning applies to the (similarly defined) sequence $\{ v_{j \rightarrow i} \}_{k}$. As a matter of notation, we denote the unknown positions (indexing the agents before the anchors) by $\bm{x} \triangleq [\bm{x}^\top_1 \ \cdots \ \bm{x}^\top_{N_u}]^\top \in \mathbb{R}^{2 N_u \times 1}$ and we define $\bm{\eta}$ as the collection of all channel parameters, i.e., $\bm{\eta} \triangleq [\bm{\eta}^\top_{\mbox{\tiny LOS}} \ \bm{\eta}^\top_{\mbox{\tiny NLOS}}]^\top$, with $\bm{\eta}_{\mbox{\tiny LOS}} \triangleq [p_{0_{\mbox{\tiny LOS}}} \ \alpha_{\mbox{\tiny LOS}} \ \sigma^2_{\mbox{\tiny LOS}}]^\top \in \mathbb{R}^{3 \times 1}$, $\bm{\eta}_{\mbox{\tiny NLOS}} \triangleq [p_{0_{\mbox{\tiny NLOS}}} \ \alpha_{\mbox{\tiny NLOS}} \ \sigma^2_{\mbox{\tiny NLOS}}]^\top \in \mathbb{R}^{3 \times 1}$.

It is important to stress that, in a more realistic scenario, channel parameters may vary from link to link and also across time.
However, such a generalization would produce an under-determined system of equations, thus giving up uniqueness of the solution and, more generally, analytical tractability of the problem.
For the purposes of this paper, the observation model above is sufficiently general to solve the localization task while retaining analytical tractability.

\subsection{Time-Averaged RSS Measures}
Motivated by a result given in Appendix \ref{sec:A}, we consider the time-averaged RSS measures, defined as
\begin{equation}
\bar{r}_{j \rightarrow i} \triangleq \frac{1}{K} \sum_{k=1}^K r^{(m)}_{j \rightarrow i}(k), \quad j \in \Gamma(i)
\end{equation}
as our new observables\footnote{For better readability, the notation $^{(m)}$ has not been carried over, as it implicit in the formalism.}.
While it would have been preferable to work with the original data from a theoretical standpoint, several considerations lead to the preference of time-averaged data, most notably: (1) comparison with other algorithms present in the literature, where the data model assumes only one sample per link, i.e. $K=1$, which is simply a special case in this paper; (2) reduced computational complexity in the subsequent algorithms; (3) if the RSS measures onto a given link needs to be communicated between two nodes, the communication cost is notably reduced, since only one scalar, instead of $K$ samples, needs to be communicated; (4) formal simplicity of the subsequent equations.

Moreover, from Appendix \ref{sec:A}, it follows that, assuming known $\bm{L}$, the ML estimators of the unknown positions based upon original data or time-averaged data are actually the same. To see this, it suffices to choose $\bm{\theta} = (\bm{x}, p_{0_{\mbox{\tiny LOS}}}, p_{0_{\mbox{\tiny NLOS}}}, \alpha_{\mbox{\tiny LOS}}, \alpha_{\mbox{\tiny NLOS}})$ and
\begin{equation}
s_{j \rightarrow i}(\bm{\theta}) = \begin{cases} p_{0_{\mbox{\tiny LOS}}} - 10 \alpha_{\mbox{\tiny LOS}} \log_{10} \| \bm{x}_j - \bm{x}_i \|, \\
\quad \quad \text{if} \ m_{j \rightarrow i}=1 ; \\
p_{0_{\mbox{\tiny NLOS}}} - 10 \alpha_{\mbox{\tiny NLOS}} \log_{10} \| \bm{x}_j - \bm{x}_i \|, \\
\quad \quad \text{if} \ m_{j \rightarrow i}=0
\end{cases}
\end{equation}
for $j \in \Gamma(i)$ and splitting the additive noise term as required. For a fixed link, only one of two cases (LoS or NLoS) is verified, thus, applying (\ref{eq:A}) of Appendix \ref{sec:A} yields
\begin{equation}
\label{eq:timeavg}
\begin{split}
\arg \max_{\bm{\theta}} p(r^{(m)}_{j \rightarrow i}(1) ,\dots, r^{(m)}_{j \rightarrow i}(K) ; \bm{\theta}, \sigma_{j \rightarrow i}^2) \\ = \arg \max_{\bm{\theta}} p(\bar{r}_{j \rightarrow i}; \bm{\theta}, \sigma^2_{j \rightarrow i})
\end{split}
\end{equation}
where $\sigma_{j \rightarrow i}^2$ is either $\sigma^2_{\mbox{\tiny LOS}}$ or $\sigma^2_{\mbox{\tiny NLOS}}$ and the general result follows from link independence.

We define $\mathcal{R}_i$ as the set of all RSS measures that node $i$ receives from anchor neighbors, i.e., $\mathcal{R}_i \triangleq \{ r^{(m)}_{a \rightarrow i}(1), \dots, r^{(m)}_{a \rightarrow i}(K) : a \in \Gamma_a(i) \}$, $\mathcal{Z}_i$ as the set of all RSS measures that node $i$ receives from agent neighbors, i.e., $\mathcal{Z}_i \triangleq \{ r^{(m)}_{j \rightarrow i}(1), \dots, r^{(m)}_{j \rightarrow i}(K) : j \in \Gamma_u(i) \}$, and $\mathcal{Y}_i$ as the set of all RSS measures locally available to node $i$, i.e., $\mathcal{Y}_i \triangleq \mathcal{R}_i \cup \mathcal{Z}_i$. Analogously, for time-averaged measures, we define $\bar{\mathcal{R}}_i \triangleq \{  \bar{r}_{a \rightarrow i} : a \in \Gamma_a(i) \}$, $\bar{\mathcal{Z}}_i \triangleq \{ \bar{r}_{j \rightarrow i} : j \in \Gamma_u(i) \}$, and $\bar{\mathcal{Y}}_i = \bar{\mathcal{R}}_i \cup \bar{\mathcal{Z}}_i$. Finally, we define
\begin{equation}
\Upsilon \triangleq \bigcup_{i=1}^{N_u} \mathcal{Y}_i
\end{equation}
which represents the information available to the whole network.

\subsection{Single-agent Robust Maximum Likelihood (ML)}
We first consider the single-agent case, which we will later use as a building block in the multi-agent case. The key idea is that instead of separately treating the LoS and NLoS cases, e.g. by hypothesis testing, we resort to a two-component Gaussian mixture model for the time-averaged RSS measures. More precisely, we assume that the probability density function (pdf),  $p(\cdot)$, of the time-averaged RSS measures, for anchor-agent links, is given by
\begin{multline}
\label{eq:gm1}
\hspace{-7mm} 
p(\bar{r}_{a \rightarrow i} ) = \\ 
\hspace{-7mm} 
\frac{\lambda_i}{\sqrt{2 \pi \frac{\sigma^2_{\mbox{\tiny LOS}}}{K}}} e^{ - \frac{K}{2 \sigma^2_{\mbox{\tiny LOS}}}
( \bar{r}_{a \rightarrow i} - p_{0_{\mbox{\tiny LOS}}} + 10 \alpha_{\mbox{\tiny LOS}} \log_{10} \| \bm{x}_a - \bm{x}_i \|)^2 } +  \\
\hspace{-8mm} 
\frac{1-\lambda_i}{\sqrt{2 \pi \frac{\sigma^2_{\mbox{\tiny NLOS}}}{K}}} e^{ - \frac{K}{2 \sigma^2_{\mbox{\tiny NLOS}}}
( \bar{r}_{a \rightarrow i} - p_{0_{\mbox{\tiny NLOS}}} + 10 \alpha_{\mbox{\tiny NLOS}} \log_{10} \| \bm{x}_a - \bm{x}_i \|)^2 } \\[-5mm] \hspace{-10mm}
\end{multline}
and, for agent-agent links,
\begin{multline}
\hspace{-7mm}
p(\bar{r}_{u \rightarrow i} ) = \\
\hspace{-7mm}
\frac{\zeta_i}{\sqrt{2 \pi \frac{\sigma^2_{\mbox{\tiny LOS}}}{K}}} e^{ - \frac{K}{2 \sigma^2_{\mbox{\tiny LOS}}}
( \bar{r}_{u \rightarrow i} - p_{0_{\mbox{\tiny LOS}}} + 10 \alpha_{\mbox{\tiny LOS}} \log_{10} \| \bm{x}_u - \bm{x}_i \|)^2 } +\\
\hspace{-8mm}
 \frac{1- \zeta_i}{\sqrt{2 \pi \frac{\sigma^2_{\mbox{\tiny NLOS}}}{K}}} e^{ - \frac{K}{2 \sigma^2_{\mbox{\tiny NLOS}}}
( \bar{r}_{u \rightarrow i} - p_{0_{\mbox{\tiny NLOS}}} + 10 \alpha_{\mbox{\tiny NLOS}} \log_{10} \| \bm{x}_u - \bm{x}_i \|)^2 }\\[-5mm] \hspace{-10mm}
\end{multline}
where:
\begin{itemize}
   \item $\lambda_i \in (0,1)$ is the mixing coefficient for anchor-agent links of node $i$;
   \item $\zeta_i \in (0,1)$ is the mixing coefficient for agent-agent links of node $i$.
\end{itemize}
Empirically, we can intuitively interpret $\lambda_i$ as the fraction of anchor-agent links in LoS (for node $i$), while $\zeta_i$ as the fraction of agent-agent links in LoS (for node $i$). As in~\cite{YinEM}, the Markov chain induced by our model is regular and time-homogeneous. From this, it follows that the Markov chain will converge to a two-component Gaussian mixture, giving a theoretical justification to the proposed approach.

Assume that there is a single agent, say node $i$, with a minimum of three anchors\footnote{The reason for this is that localizing a node in 2D requires at least three anchors.} in its neighborhood ($| \Gamma_a(i) | \geq 3$), in a mixed LoS/NLoS scenario. Our goal is to obtain the Maximum Likelihood estimator (MLE) of the position of node $i$. Let $\bm{\bar{r}}_i = [\bar{r}_{1 \rightarrow i} \ \cdots \ \bar{r}_{| \Gamma_a(i)| \rightarrow i}]^\top \in \mathbb{R}^{| \Gamma_a(i)| \times 1}$ be the collection of all the time-averaged RSS measures available to node $i$. Using the previous assumptions and the independency between the links, the joint likelihood function\footnote{Hereafter, we omit the conditioning on the set $\{ o_{n \rightarrow i} \}$ of actually observed RSS measures (received by node $i$) in the joint likelihood function, since it is implicit in the neighborhood formalism.} $p( \bm{\bar{r}}_i ; \bm{\theta} )$ is given by
\begin{equation}
p( \bm{\bar{r}}_i ; \bm{\theta} ) = \prod_{a \in \Gamma_a(i)} p(\bar{r}_{a \rightarrow i}; \bm{\theta} )
\end{equation}
where $\bm{\theta} = (\bm{x}_i, \lambda_i, \bm{\eta})$. Thus, denoting with $L(\bm{\theta}; \bm{\bar{r}}_i)$ the log-likelihood, we have
\begin{equation}
L(\bm{\theta}; \bm{\bar{r}}_i) = \sum_{a \in \Gamma_a(i)} \ln p(\bar{r}_{a \rightarrow i})
\end{equation}
The MLE of $\bm{\theta}$ is given by
\begin{equation}
\label{eq:robustsc}
\hat{\bm{\theta}}_{ML} = \arg \max_{\bm{\theta}} L( \bm{\theta} ;  \bm{\bar{r}}_i)
\end{equation}
where the maximization is subject to several constraints: $\lambda_i \in (0,1)$, $\alpha_{\mbox{\tiny LOS}} > 0$, $\alpha_{\mbox{\tiny NLOS}} > 0$, $\sigma^2_{\mbox{\tiny LOS}} > 0$, $\sigma^2_{\mbox{\tiny NLOS}} > 0$. In general, the previous maximization admits no closed-form solution, so we must resort to numerical procedures. \smallskip

\subsection{Multi-agent Robust ML-based scheme}
In principle, our goal would be to have a ML estimate of all the $N_u$ unknown positions, denoted by $\bm{x}$. Let $\bm{\lambda} \triangleq [\lambda_1 \ \cdots \ \lambda_{N_u}]^\top$, $\bm{\zeta} \triangleq [\zeta_1 \ \cdots \ \zeta_{N_u}]^\top$ be the collections of the mixing coefficients. Defining $\bm{\theta} = (\bm{x}, \bm{\lambda}, \bm{\zeta}, \bm{\eta})$, the ML joint estimator
\begin{equation}
\hat{\bm{\theta}}_{ML} = \arg \max_{\bm{\theta}} p( \Upsilon; \bm{\theta})
\end{equation}
is, in general, computationally unfeasible and naturally centralized. In order to obtain a practical algorithm, we now resort to a sub-optimal, but computationally feasible and distributed approach. The intuition is as follows. Assume, for a moment, that a specific node $i$ knows $\bm{\eta}$, $\bm{\lambda}$, $\bm{\zeta}$ and also all the true positions of its neighbors (which we denote by $\mathcal{X}_i$). Then, the ML joint estimation problem is notably reduced, in fact,
\begin{equation}
\hat{\bm{x}}_{i_{ML}} = \arg \max_{\bm{x}_i} p( \Upsilon; \bm{x}_i)
\end{equation}
We now make the sub-optimal approximation of avoiding non-local information in order to obtain a distributed algorithm, thus resorting to
\begin{equation}
\label{eq:multiapprox}
\hat{\bm{x}}_i = \arg \max_{\bm{x}_i} p( \mathcal{\bar{Y}}_i ; \bm{x}_i, \mathcal{X}_i, \lambda_i, \zeta_i, \bm{\eta} )
\end{equation}
where we made explicit the functional dependence on all the other parameters (which, for now, are assumed known). Due to the i.i.d.~hypothesis, the ``local'' likelihood function has the form
\begin{multline}
p( \mathcal{\bar{Y}}_i ; \bm{x}_i, \mathcal{X}_i, \lambda_i, \zeta_i, \bm{\eta} ) = \prod_{a \in \Gamma_a(i)} p( \bar{r}_{a \rightarrow i} ; \bm{x}_i, \lambda_i, \bm{\eta} )\times\\ \prod_{j \in \Gamma_u(i) } p ( \bar{r}_{j \rightarrow i} ; \bm{x}_i, \mathcal{X}_i, \zeta_i, \bm{\eta} )
\end{multline}
where the marginal likelihoods are Gaussian-mixtures and we underline the (formal and conceptual) separation between anchor-agent links and agent-agent links. By taking the natural logarithm, we have
\begin{multline}
\hspace{-7mm}
\ln p( \mathcal{\bar{Y}}_i ;  \bm{x}_i, \mathcal{X}_i, \lambda_i, \zeta_i, \bm{\eta} ) = \sum_{a \in \Gamma_a(i)} \ln p( \bar{r}_{a \rightarrow i} ; \bm{x}_i, \lambda_i, \bm{\eta} ) + \\
\sum_{j \in \Gamma_u(i) } \ln p ( \bar{r}_{j \rightarrow i} ; \bm{x}_i, \mathcal{X}_i, \zeta_i, \bm{\eta} )
\end{multline}
The maximization problem in~(\ref{eq:multiapprox}) then reads
\begin{multline}
\label{eq:theocost}
\hat{\bm{x}}_i = \arg \max_{\bm{x}_i} \left\{ \sum_{a \in \Gamma_a(i)} \ln p( \bar{r}_{a \rightarrow i} ; \bm{x}_i, \lambda_i, \bm{\eta} ) \right.+ \\
\left.\sum_{j \in \Gamma_u(i) } \ln p ( \bar{r}_{j \rightarrow i} ; \bm{x}_i, \mathcal{X}_i, \zeta_i, \bm{\eta} ) \right\}.
\end{multline}
We can now relax the initial assumptions: instead of assuming known neighbors positions $\mathcal{X}_i$, we will substitute them with their estimates, $\hat{\mathcal{X}}_i$. Moreover, since the robust ML-based self-calibration can be done without knowing the channel parameters $\bm{\eta}$, we also maximize over them. Lastly, we maximize with respect to the mixing coefficients $\lambda_i, \zeta_i$. Thus, our final approach is
\begin{multline}
\label{eq:robustcost}
\hspace{-8mm}
(\hat{\bm{x}}_i, \hat{\lambda}_i, \hat{\zeta}_i, \hat{\bm{\eta}}) = \arg \max_{\bm{x}_i, \lambda_i, \zeta_i, \bm{\eta}} \left\{ \sum_{a \in \Gamma_a(i)} \ln p( \bar{r}_{a \rightarrow i} ; \bm{x}_i, \lambda_i, \bm{\eta} ) \right. \\
+ \left. \sum_{j \in \hat{\Gamma}_u(i) } \ln p ( \bar{r}_{j \rightarrow i} ; \bm{x}_i, \hat{\mathcal{X}}_i, \zeta_i, \bm{\eta} ) \right\}
\end{multline}
where $\hat{\Gamma}_u(i)$ is the set of all agent-neighbors of node $i$ for which estimated positions exist. We can iteratively construct (and update) the set $\hat{\Gamma}_u(i)$, in order to obtain a fully distributed algorithm, as summarized in Algorithm 1.
\begin{algorithm}[h]
 \caption{Robust Distributed Maximum Likelihood (RD-ML)}
 \begin{algorithmic}[0]
 \STATE \textbf{Initialization}:
 \STATE \emph{Self-Localization:} Every node $i$ with $|\Gamma_a(i)| \geq 3$ self-localizes by the robust procedure~(\ref{eq:robustsc});
 \STATE \emph{Broadcast:} Every node which self-localized broadcasts its position estimate $\hat{\bm{x}}_i$;
\\ \textbf{Iterative scheme}: Start with $n \leftarrow 1$;
  \STATE \emph{Update position}: Every node with $| \Gamma_a(i)| + |\hat{\Gamma}^{(n)}_u(i) | \geq 3$ estimates its own position by solving~(\ref{eq:robustcost});
  \STATE \emph{Broadcast:} The updated position $\hat{\bm{x}}^{(n)}_i$ is broadcasted;
  \STATE \emph{Update Neighbors:} Every node $j$ for which the new estimated positions $\hat{\bm{x}}^{(n)}_i$ are neighbors updates its own $\hat{\Gamma}^{(n)}_u(j)$ with the new positions;
   \STATE \emph{Repeat:} Set $n \leftarrow n+1$ and repeat the previous steps until stop condition is met.
 \\ \textbf{Stop condition:}
 \STATE $| \hat{\Gamma}^{(n)}_u(i) | = | \Gamma_u(i) |$ for every node $i$.
 \end{algorithmic}
 \end{algorithm}
 
A few remarks are now in order. First, this algorithm imposes some restrictions on the arbitrariness of the network topology, since the information spreads starting from the agents which were able to self-localize during initialization; in practice, this requires the network to be sufficiently connected. Second, convergence of the algorithm is actually a matter of compatibility: if the network is sufficiently connected (compatible), convergence is guaranteed. Given a directed graph, compatibility can be tested a priori and necessary and sufficient conditions can be found (see Section IV). Third, unlike many algorithms present in the literature, symmetrical links are not necessary, nor do we resort to symmetrization (like NBP): this algorithm naturally takes into account the (possible) asymmetrical links of directed graphs.

\subsection{Distributed Maximum Likelihood (D-ML)}
As a natural competitor of the proposed RD-ML algorithm, we derive here the Distributed Maximum Likelihood (D-ML) algorithm, which assumes that all links are of the same type. As its name suggests, this is the non-robust version of the previously derived RD-ML. As usual, we start with the single-agent case as a building block for the multi-agent case. Using the assumption that all links are the same and the i.i.d. hypothesis, the joint pdf of the time-averaged RSS measures, received by agent $i$, is given by
\begin{multline}
p(\bm{\bar{r}}_i ; \bm{x}_i, p_0, \alpha, \sigma^2) = \frac{1}{(2 \pi \sigma^2)^{|\Gamma_a(i)|/2}} \times \\
e^{ -\frac{1}{2 \sigma^2} \sum_{a \in \Gamma_a(i)} ( \bar{r}_{a \rightarrow i} - p_0 + 10 \alpha \log_{10} \| \bm{x_i} - \bm{x}_a \| )^2 }
\end{multline}
We can now proceed by estimating, with the ML criterion, first $p_0$ as a function of the remaining parameters, followed by $\alpha$ as a function of $\bm{x}_i$ and finally $\bm{x}_i$. We have
\begin{multline}
\hspace{-7mm}
\hat{p}_0(\alpha, \bm{x}_i) = \\ 
\hspace{-8mm} \arg \min_{p_0} \sum_{a \in \Gamma_a(i)} ( \bar{r}_{a \rightarrow i} - p_0 + 10 \alpha \log_{10} \| \bm{x_i} - \bm{x}_a \| )^2. \\[-9mm] \hspace{-7mm}
\end{multline}\\
Defining $s_{a,i} \triangleq 10 \log_{10} \| \bm{x_i} - \bm{x}_a \|$ as the log-distance, $\bm{s}_i \triangleq [s_{1,i} \ s_{2,i} \ \cdots \ s_{|\Gamma_a(i)|,i}]^\top \in \mathbb{R}^{|\Gamma_a(i)| \times 1}$ the column-vector collecting them and $\bm{1}_n = [1 \cdots 1]^\top \in \mathbb{R}^{n \times 1}$ an all-ones vector of dimension $n$, the previous equation can be written as
\begin{equation}
\hat{p}_0(\alpha, \bm{x}_i) = \arg \min_{p_0} \| \bm{\bar{r}}_i  + \alpha \bm{s}_i - p_0 \bm{1}_{|\Gamma_a(i)|} \|^2
\end{equation}
which is a Least-Squares (LS) problem and its solution is
\begin{equation}
\label{eq:dmlp0}
\hat{p}_0(\alpha, \bm{x}_i) = \frac{1}{| \Gamma_a(i) |} \sum_{a \in \Gamma_a(i)} (\bar{r}_{a \rightarrow i}  + 10 \alpha \log_{10} \| \bm{x_i} - \bm{x}_a \| )
\end{equation}
By using this expression, the problem of estimating $\alpha$ as a function of $\bm{x}_i$ is
\begin{equation}
\label{eq:alphamin}
\hat{\alpha}(\bm{x}_i) = \arg \min_{\alpha} \| \bm{P^{ ^\perp}}_{\bm{1}_{|\Gamma_a(i)|}}( \bm{\bar{r}}_i + \alpha \bm{s}_i) \|^2
\end{equation}
where, given a full-rank matrix $\bm{A} \in \mathbb{R}^{m \times n}$, with $m \geq n$, $\bm{P}^{^ \perp}_{\bm{A}}$ is the orthogonal projection matrix onto the orthogonal complement of the space spanned by the columns of $\bm{A}$. It can be computed via $\bm{P}^{^ \perp}_{\bm{A}} = \bm{I}_m - \bm{P_A}$, where $\bm{P_A} = \bm{A} (\bm{A}^\top \bm{A})^{-1} \bm{A}^\top$ is an orthogonal projection matrix and $\bm{I}_m$ is the identity matrix of order $m$. The solution to problem~(\ref{eq:alphamin}) is given by
\begin{equation}
\label{eq:dmlalpha}
\hat{\alpha}(\bm{x}_i) = - \bm{ ( \tilde{s}_i^\top \tilde{s}_i )^{-1} \tilde{s}_i^\top \tilde{r}_i } 
\end{equation}
where $\bm{\tilde{r}}_i = \bm{P^{ ^\perp}}_{\bm{1}_{|\Gamma_a(i)|}} \bm{\bar{r}}_i$ and $\bm{\tilde{s}}_i = \bm{P^{ ^\perp}}_{\bm{1}_{|\Gamma_a(i)|}} \bm{s}_i$. By using the previous expression, we can finally write
\begin{equation}
\label{eq:dmlsingle}
\hat{\bm{x}_i} = \arg \min_{\bm{x}_i} \| \bm{P^{ ^\perp}}_{\bm{\tilde{s}_i}} \bm{\tilde{r}}_i \|^2
\end{equation}
which, in general, does not admit a closed-form solution, but can be solved numerically. After obtaining $\hat{\bm{x}}_i$, node $i$ can estimate $p_0$ and $\alpha$ using (\ref{eq:dmlp0}) and (\ref{eq:dmlalpha}).

The multi-agent case follows an almost identical reasoning of the RD-ML. Approximating the true (centralized) MLE by avoiding non-local information and assuming to already have an initial estimate of $p_0$ and $\alpha$, it is possible to arrive at
\begin{multline}
\label{eq:dmlcost}
\hspace{-7mm}
\hat{\bm{x}}_i = \arg \min_{\bm{x}_i} \Bigg\{ \sum_{a \in \Gamma_a(i)}( \bar{r}_{a \rightarrow i} - p_0 + 10 \alpha \log_{10} \| \bm{x_i} - \bm{x}_a \| )^2 \\
\hspace{-7mm}
+ \sum_{j \in \hat{\Gamma}_u(i)} ( \bar{r}_{j \rightarrow i} - p_0 + 10 \alpha \log_{10} \| \bm{x_i} - \bm{x}_j \| )^2 \Bigg\}
\end{multline}
where (again) an initialization phase is required and the set of estimated agents-neighbors $\hat{\Gamma}_u(i)$ is iteratively updated. The key difference with RD-ML is that, due to the assumption of the links being all of the same type, the estimates of $p_0$ and $\alpha$ are broadcasted and a common consensus is reached by averaging. This increases the communication overhead, but lowers the computational complexity, operating a trade-off. The D-ML algorithm is summarized in Algorithm 2.
\begin{algorithm}[h]
 \caption{Distributed Maximum Likelihood (D-ML)}
 \begin{algorithmic}[0]
 \STATE \textbf{Initialization}:
 \STATE \emph{Self-Localization:} Every node $i$ with $|\Gamma_a(i)| \geq 3$ self-localizes by solving~(\ref{eq:dmlsingle}) and estimates $\alpha$ via~(\ref{eq:dmlalpha}) and $p_0$ via~(\ref{eq:dmlp0}), thus obtaining $(\hat{p}^{(i)}_0, \hat{\alpha}^{(i)}, \hat{\bm{x}}_i)$;
 \STATE \emph{Broadcast:} Every node which self-localized broadcasts its local estimates $(\hat{p}^{(i)}_0, \hat{\alpha}^{(i)}, \hat{\bm{x}}_i)$;
 \STATE \emph{Consensus:} All the nodes agree on global values of $(\hat{p}_0, \hat{\alpha})$ by averaging all the local estimates $(\hat{p}^{(i)}_0, \hat{\alpha}^{(i)})$;
\\ \textbf{Iterative scheme}: Start with $n \leftarrow 1$;
  \STATE \emph{Update position}: Every node $i$ with $| \Gamma_a(i)| + |\hat{\Gamma}^{(n)}_u(i) | \geq 3$ estimates its own position by solving~(\ref{eq:dmlcost});
  \STATE \emph{Broadcast:} The updated position $\hat{\bm{x}}^{(n)}_i$ is broadcasted;
  \STATE \emph{Update Neighbors:} Every node $j$ for which the new estimated positions $\hat{\bm{x}}^{(n)}_i$ are neighbors updates its own $\hat{\Gamma}^{(n)}_u(j)$ with the new positions;
   \STATE \emph{Repeat:} Set $n \leftarrow n+1$ and repeat the previous steps until stop condition is met.
 \\ \textbf{Stop condition:}
 \STATE $| \hat{\Gamma}^{(n)}_u(i) | = | \Gamma_u(i) |$ for every node $i$.
 \end{algorithmic}
 \end{algorithm}
 
Similar remarks as for the RD-ML can be made for the D-ML. Again, the network's topology cannot be completely arbitrary, as the information must spread throughout the network starting from the agents which self-localized, implying that the graph must be sufficiently connected. Necessary and sufficient conditions to answer the compatibility question are the same as RD-ML. Secondly, the (strong) hypothesis behind the D-ML derivation (i.e., all links of the same type) allows for a more analytical derivation, up to position estimation, which is a Nonlinear Least-Squares problem. However, it is also its weakness since, as will be shown later, it is not a good choice for mixed LoS/NLoS scenarios.

\subsection{Centralized MLE with known nuisance parameters (C-MLE)}
The centralized MLE of $\bm{x}$ with known nuisance parameters, i.e. assuming known $\bm{L}$ and $\bm{\eta}$, is chosen here as a benchmark for both RD-ML and D-ML. In the following, this algorithm will be denoted by C-MLE. Its derivation is simple (see Appendix \ref{sec:cmle}) and results in
\begin{multline}
\label{eq:cmle}
\hat{\bm{x}}_{ML} = \arg \min_{\bm{x}} \sum_{i=1}^{N_u} \sum_{j \in \Gamma(i)} \frac{1}{\sigma^2_{j \rightarrow i}} ( \bar{r}_{j \rightarrow i}
- p_{0_{j\rightarrow i}} + \\
10 \alpha_{j \rightarrow i} \log_{10} \| \bm{x}_j - \bm{x}_i \| )^2
\end{multline}
where $(p_{0_{j \rightarrow i}}, \alpha_{j \rightarrow i}, \sigma^2_{j \rightarrow i})$ are either LoS or NLoS depending on the considered link. It is important to observe that, if all links are of the same type, the dependence from $\sigma^2_{j \rightarrow i}$ in~(\ref{eq:cmle}) disappears. From standard ML theory~\cite{Kay1}, C-MLE is asymptotically ($K\rightarrow +\infty$) optimal. The optimization problem~(\ref{eq:cmle}) is computationally challenging, as it requires a minimization in a $2 N_u$-dimensional space, but still feasible for small values of $N_u$.

\section{Convergence Analysis}
The convergence test of our proposed algorithm (and also of D-ML) can be restated as a graph coloring problem: if all the graph can be colored, then it is compatible and convergence is guaranteed. As it is common in the literature on graph theory, let $G = (V, E)$ be a directed graph, with $V$ denoting the set of \emph{nodes} and $E$ the set of \emph{directed edges}. The set of nodes is such that $V = V_a \cup V_u$, where $V_a$ is the (nonempty) set of anchor nodes and $V_u$ is the (nonempty) set of agent nodes.

\begin{defn}[RDML-initializable]{A directed graph $G$ is said to be \emph{RDML-initializable} if and only if there exists at least one agent node, say $x$, such that $| \Gamma_a(x) | \geq 3$.}
\end{defn}

As can be easily checked, the previous statement is a necessary condition: if a graph is not RDML-initializable, then it is incompatible with RD-ML. To give a necessary and sufficient condition, we introduce the notion of ``color''. A node can be either black or white; all anchors are black and all agent nodes start as white, but may become black if some condition is satisfied. The RD-ML can be rewritten as a graph coloring problem. In order to do this, we define the set $\hat{\Gamma}_u(i) \triangleq \{ j \in \Gamma_u(i) : \text{agent} \ j \ \text{is black} \}$, i.e., the subset of agent-neighbors of node $i$ which contains only black agent nodes. In general, $\hat{\Gamma}_u(i) \subseteq \Gamma_u(i)$. We also define the set $B_u \triangleq \{ i \in V_u: \text{agent $i$ is black} \}$, i.e., the set of black agents. In general, $B_u \subseteq V_u$. Given a graph $G$, we can perform a preliminary test by running the following RDML-coloring algorithm (a better test will be derived later):
\begin{itemize}
   \item \textbf{Initialization} ($k=0$)
   \begin{enumerate}
      \item All anchors are colored black and all agents white;
      \item Every agent $i$ with $|\Gamma_a(i)| \geq 3$ is colored black;
   \end{enumerate}
   \item \textbf{Iterative coloring}: Start with $k=1$
   \begin{enumerate}
      \item Every agent with $| \Gamma_a(i)| + |\hat{\Gamma}^{(k-1)}_u(i) | \geq 3$, where $\hat{\Gamma}^{(k)}_u(i)$ is the set $\hat{\Gamma}_u(i)$ at step $k$, is colored black;
      \item Every agent $j$ updates is own $\hat{\Gamma}^{(k)}_u(j)$ with the new colored nodes;
      \item The set $B^{(k)}_u$ is updated, where $B^{(k)}_u$ is the set $B_u$ at step $k$;
      \item Set $k \leftarrow k+1$ and repeat the previous steps until $V_u$ contains only black nodes.
   \end{enumerate}
\end{itemize}
Suppose that the previous algorithm can color the entire graph black in a finite amount of steps, say $n$. Then, $n$ is called RDML-lifetime.

\begin{defn}[RDML-lifetime]{A directed graph $G$ is said to have \emph{RDML-lifetime} equal to $n$ if and only if the RDML-coloring algorithm colors black the set $V_u$ in exactly $n$ steps. If no such integer exists, by convention, $n = +\infty$.}
\end{defn}

This allows us to formally define compatibility:

\begin{defn}[RDML-compatibility]{A directed graph $G$ is said to be \emph{RDML-compatible} if and only if:
\begin{enumerate}
   \item $G$ is RDML-initializable;
   \item the RDML-lifetime of $G$ is finite.
\end{enumerate}
Otherwise, $G$ is said to be \emph{RDML-incompatible}.}
\end{defn}

In practice, there are only two ways for which a graph is RDML-incompatible: either $G$ cannot be initialized, or the RDML-lifetime of $G$ is infinite. Testing the first condition is trivial; the interesting result is that testing the second condition is made simple thanks to the following

\begin{thm}{An RDML-initializable graph $G$ is RDML-incompatible if and only if there exist an integer $h$ such that
\begin{equation}
\begin{gathered}
\label{eq:b}
B^{(h)}_u = B^{(h-1)}_u \ \wedge \ \ |B^{(h)}_u | < | V_u |
\end{gathered}
\end{equation}
that is, if there is a step $h$ in which no more agents can be colored black and at least one agent is still left white.}
\end{thm}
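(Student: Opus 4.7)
The plan is to prove both implications by exploiting two elementary properties of the RDML-coloring process: (i) the sequence $\{B^{(k)}_u\}_{k\ge 0}$ is monotonically non-decreasing (once an agent is colored black it remains black), and (ii) whether a white agent $i$ becomes black at step $k$ depends \emph{only} on the current black set (through $\Gamma_a(i)$ and $\hat{\Gamma}^{(k-1)}_u(i)$), not on the step index. Property (ii) is the key structural observation: the update rule is a fixed monotone operator $T$ on the subset lattice of $V_u$ so that $B^{(k)}_u = T(B^{(k-1)}_u)$, and hence whenever two consecutive iterates coincide the sequence is constant from that point on.

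For the forward direction, assume $G$ is RDML-initializable but RDML-incompatible, so by Definition the RDML-lifetime is $+\infty$, i.e., $|B^{(k)}_u| < |V_u|$ for every $k$. By (i) the integer sequence $\{|B^{(k)}_u|\}$ is non-decreasing and bounded above by $|V_u|$, hence it is eventually constant: there is some $h$ with $|B^{(h)}_u| = |B^{(h-1)}_u|$, which together with monotonicity $B^{(h-1)}_u \subseteq B^{(h)}_u$ gives $B^{(h)}_u = B^{(h-1)}_u$. Since the lifetime is infinite we also have $|B^{(h)}_u| < |V_u|$, yielding~(\ref{eq:b}).

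For the converse, suppose there is an $h$ with $B^{(h)}_u = B^{(h-1)}_u$ and $|B^{(h)}_u| < |V_u|$. By property (ii), since the inputs to the coloring update at step $h+1$ are determined entirely by $B^{(h)}_u$ (and the fixed anchor neighborhoods), we get $B^{(h+1)}_u = T(B^{(h)}_u) = T(B^{(h-1)}_u) = B^{(h)}_u$. A trivial induction then shows $B^{(k)}_u = B^{(h)}_u$ for every $k \geq h$, so $|B^{(k)}_u| < |V_u|$ for all $k$ and the RDML-lifetime is infinite; thus $G$ is RDML-incompatible.

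The only mildly delicate point is justifying the ``memoryless'' property (ii), i.e., that $T$ is well-defined as a map of subsets of $V_u$: one must observe that $\hat{\Gamma}^{(k)}_u(i)$ is literally the intersection of the fixed set $\Gamma_u(i)$ with the current black-agent set $B^{(k)}_u$, so the criterion $|\Gamma_a(i)| + |\hat{\Gamma}^{(k-1)}_u(i)| \geq 3$ depends only on $B^{(k-1)}_u$. Once this is made explicit the rest of the argument reduces to the standard ``monotone operator on a finite lattice stabilizes iff two consecutive iterates agree'' observation, and no further combinatorial work is needed.
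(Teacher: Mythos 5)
Your proof is correct and follows essentially the same route as the paper's: both directions rest on the monotonicity $B^{(k-1)}_u \subseteq B^{(k)}_u$ together with the observation that the coloring update depends only on the current black set, so that two coinciding consecutive iterates force the process to stall forever. Your explicit framing of the update as a fixed operator $T$ with $\hat{\Gamma}^{(k)}_u(i) = \Gamma_u(i) \cap B^{(k)}_u$ is merely a cleaner formalization of the paper's ``no color change happened, so the conditions still fail'' argument, not a different proof.
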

\begin{proof} $(\Leftarrow)$ First, observe that, by construction, \\$B^{(k-1)}_u \subseteq B^{(k)}_u$, as black nodes can only be added. Let $C_G : \mathbb{N} \rightarrow \mathbb{N}$ be the following function
\begin{equation}
C_G(k) = | B^{(k)}_u |.
\end{equation}
Since $B^{(k-1)}_u \subseteq B^{(k)}_u$, $C_G$ is non-decreasing. An RDML-initializable graph is RDML-compatible if and only if it has finite RDML-lifetime, i.e., there must exist $n$ such that $C_G(n) = | V_u |$. But condition~(\ref{eq:b}) implies that there exists $h$ such that $C_G(h) = C_G(h-1) < | V_u |$. At step $h+1$ and all successive steps, $C_G$ cannot increase since the set $B^{(k)}_u$ cannot change. To show this, the key observation is that, as the graph $G$ at step $h-1$ did not satisfy the conditions for $B^{(h-1)}_u$ to grow (by hypothesis), the set was equal to itself at step $h$, i.e., $B^{(h-1)}_u = B^{(h)}_u$. But since no color change happened in $V_u$ at step $h$, the graph $G$ still does not satisfy the conditions for $B^{(k)}_u$ to grow for $k \geq h$. Thus, $C_G(k)$ becomes a constant function for $k \geq h$ and can never reach the value $|V_u|$.

$(\Rightarrow)$ Since $G$ is an RDML-incompatible graph by hypothesis, at least one agent must be white, so $|B_u^{(h)}| < |V_u|$ is true for any $h$. Since $C_G(k)$ is non-decreasing, it must become constant for some $h > k$, but this implies that, for some $h$, $|B^{(h)}_u| = |B^{(h-1)}_u|$. This implies that, since $B^{(k-1)}_u \subseteq B^{(k)}_u$ by construction, $B^{(h)}_u = B^{(h-1)}_u$ for some $h$.
\end{proof}

\begin{defn}[RDML-depth]{Let $G$ be a directed graph. Then,
\begin{equation}
h_G \triangleq \inf_{h} \{ h \in \mathbb{N} : B^{(h)}_u = B^{(h+1)}_u \}
\end{equation}
is called the \emph{RDML-depth} of $G$.}
\end{defn}

A complete graph has $h_G=0$, as all agents are colored black during the initialization phase of the RDML-coloring algorithm.

\begin{cor}{Let $G$ be a directed graph. Then, $h_G \leq n$, where $n$ is the RDML-lifetime of $G$.}
\end{cor}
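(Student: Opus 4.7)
The plan is to split on whether the RDML-lifetime $n$ is finite or infinite. If $n=+\infty$ (i.e., $G$ is RDML-incompatible, or not RDML-initializable so that no step can complete the coloring), the inequality $h_G \leq n$ is trivial. So the interesting case is when $n \in \mathbb{N}$, which by the Definition of RDML-lifetime means that after exactly $n$ iterations of the RDML-coloring algorithm we have $B^{(n)}_u = V_u$.

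The key observation I would use is monotonicity together with the bounded set $V_u$. From the proof of Theorem~1 we already know that $B^{(k-1)}_u \subseteq B^{(k)}_u$ for every $k$, simply because the coloring rule only adds nodes. Once $B^{(n)}_u = V_u$, the inclusion $B^{(n)}_u \subseteq B^{(n+1)}_u \subseteq V_u$ forces $B^{(n+1)}_u = V_u = B^{(n)}_u$. Hence $n$ belongs to the set
\[
\{ h \in \mathbb{N} : B^{(h)}_u = B^{(h+1)}_u \},
\]
and since $h_G$ is defined as the infimum of this set, we immediately obtain $h_G \leq n$, which is exactly the statement of the corollary.

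I do not expect any real obstacle here; the argument is essentially an index-chasing observation on the monotone saturation of a finite set. The only thing to be careful about is the bookkeeping of the iteration index: one must verify that the step at which $V_u$ is first fully black is indeed the $n$ from the RDML-lifetime definition, and that the indices in the infimum defining $h_G$ are consistent with those in the RDML-coloring algorithm. Once that is checked, the corollary follows from the two lines above, and it neatly complements Theorem~1 by giving a computable upper bound: one can stop the coloring procedure as soon as the set $B^{(\cdot)}_u$ stabilizes, and by this corollary no earlier termination is possible when the graph is compatible.
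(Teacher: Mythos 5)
Your proof is correct and follows essentially the same route as the paper's: a case split on whether $n$ is finite, with the finite case settled by monotonicity forcing $B^{(n)}_u = V_u = B^{(n+1)}_u$ so that $n$ witnesses the infimum defining $h_G$. The only cosmetic difference is that the paper also records that $h_G$ is finite even when $n=+\infty$ (via Theorem~1), which it needs for the remark following the corollary, but that is not required for the inequality itself.
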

\begin{proof} If $G$ is not RDML-initializable, $h_G = 0$ as $B^{(0)}_u = \emptyset$. If $G$ is RDML-initializable, there are two cases: either $n$ is finite or not. In the latter, $n=+\infty$ and $h_G$ is finite by previous theorem. If $n$ is finite, $h_G = n$ since $|B^{(n)}_u| = |V_u|$ by definition of RDML-lifetime.
\end{proof}

The previous corollary proves that $h_G$ is always finite, regardless of $G$. This allows us to write the graph compatibility test, shown in Algorithm 3. Thanks to the previous results, this algorithm always converges and can be used to test a priori if a graph is RDML-compatible or not.

\noindent {\bf Remark} Algorithm 3 can be intuitively explained via a physical metaphor, where, in a metal grid (representing the graph), ``heat'' (information) spreads out starting from some initial ``hot spots'' (nodes that are colored black in the first iteration). This spreading is continued, reaching more and more locations on the grid, until the event occurs that further spreading of ``heat'' does not change the ``heat map''. If, at this point, there are cold spots (nodes that have not been colored black), the graph is RDML-incompatible. By contrast, if heat spreads throughout the grid, the graph is RDML-initializable.

\begin{algorithm}[h]
 \caption{Graph Compatibility Test}
 \begin{algorithmic}[1]
 \renewcommand{\algorithmicrequire}{\textbf{Input:}}
 \renewcommand{\algorithmicensure}{\textbf{Output:}}
 \REQUIRE Directed Graph $G$ with $V = V_a \cup V_u$
 \ENSURE  Binary flag $f_G$, Depth $h_G$
 \\ \textit{Initialization} :
  \STATE Set $Color(V_u) \leftarrow 0$; \%\textit{white}
  \FOR {$i=1$ to $|V_u|$}
   \IF {$|\Gamma_a(i) | \geq 3$}
    \STATE $Color(i) \leftarrow 1$; \%\textit{black}
  \ENDIF
  \ENDFOR
  \STATE Initialize set $\hat{\Gamma}^{(0)}_u(i)$ for every agent $i$ and set $B^{(0)}_u$;
  \STATE Compute $C_G(0) = |B^{(0)}_u|$; \textbf{if} $C_G(0)=|V_u|$, \textbf{stop} and \textbf{return} $(f_G, h_G)=(1,0)$;
 \\ \textit{Coloring Loop}
 \STATE Set $k \leftarrow 1$;
  \WHILE {$(1)$}
  \FOR {$i=1$ to $|V_u|$}
  \IF {$| \Gamma_a(i)| + |\hat{\Gamma}^{(k-1)}_u(i) | \geq 3$}
  \STATE $Color(i) \leftarrow 1$;
  \ENDIF
  \ENDFOR
  \STATE Update $\hat{\Gamma}^{(k)}_u(i)$ for every agent $i$ and compute $C_G(k) \leftarrow |B^{(k)}_u|$;
  \\ \textit{Exit condition}
  \IF {$C_G(k-1) = C_G(k)$}
   \STATE $h_G \leftarrow k-1$;
   \IF{$C_G(k) < |V_u|$}
    \STATE $f_G \leftarrow 0$; \%\textit{graph $G$ is RDML-incompatible}
   \ELSE
    \STATE $f_G \leftarrow 1$; \%\textit{graph $G$ is RDML-compatible}
   \ENDIF
  \STATE \textbf{break} loop;
  \ENDIF
  \STATE $k \leftarrow k+1$;
  \ENDWHILE
 \RETURN $f_G, h_G$.
 \end{algorithmic} 
 \label{algo:compat_test}
 \end{algorithm}

\begin{figure}[t!]
\centering
\includegraphics[scale=0.6]{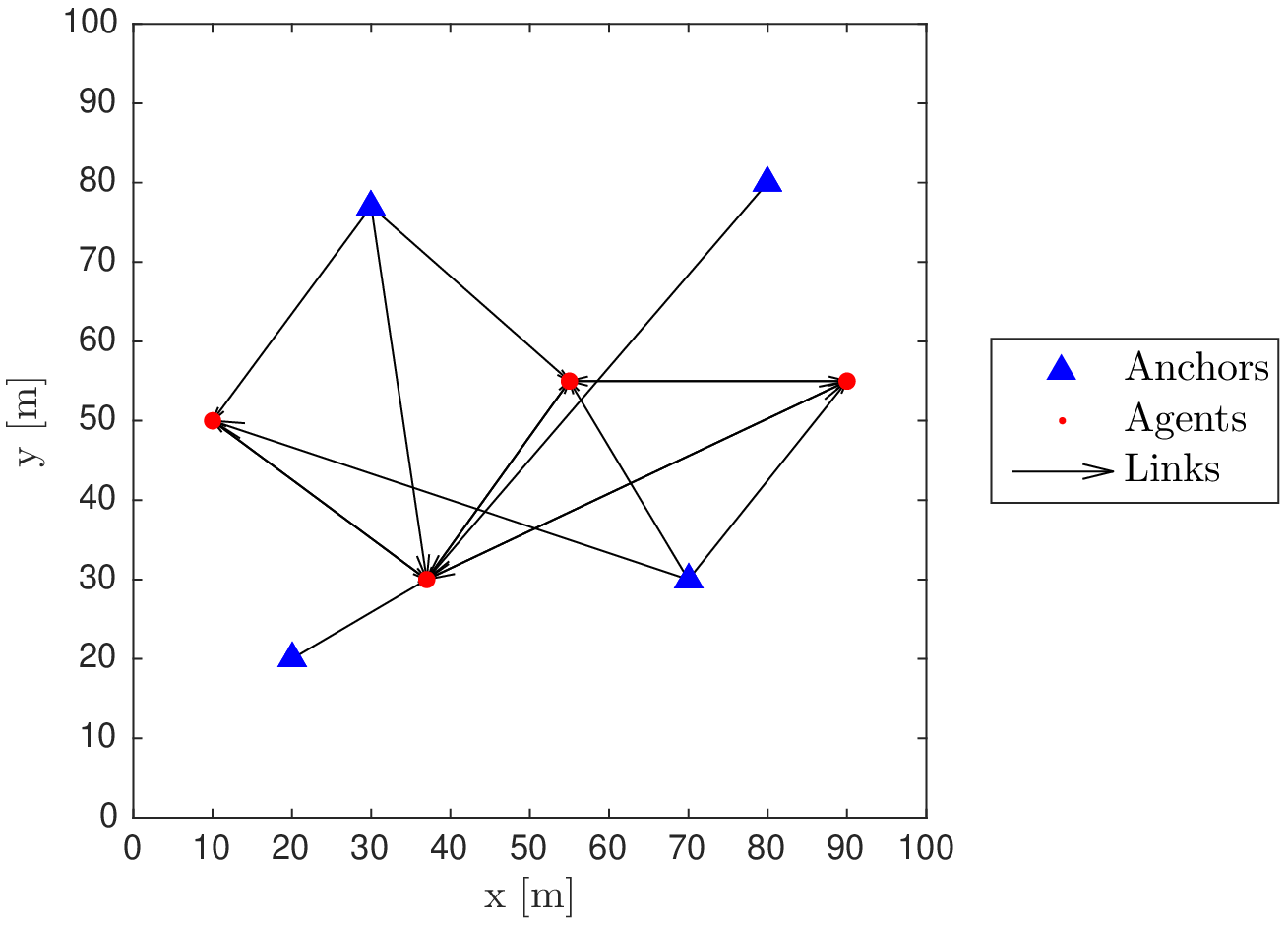}
\caption{Example of a compatible graph with $N_a=4, N_u=4$.}
\label{fig:example1}
\end{figure}
\begin{figure}[t!]
\centering
\includegraphics[scale=0.6]{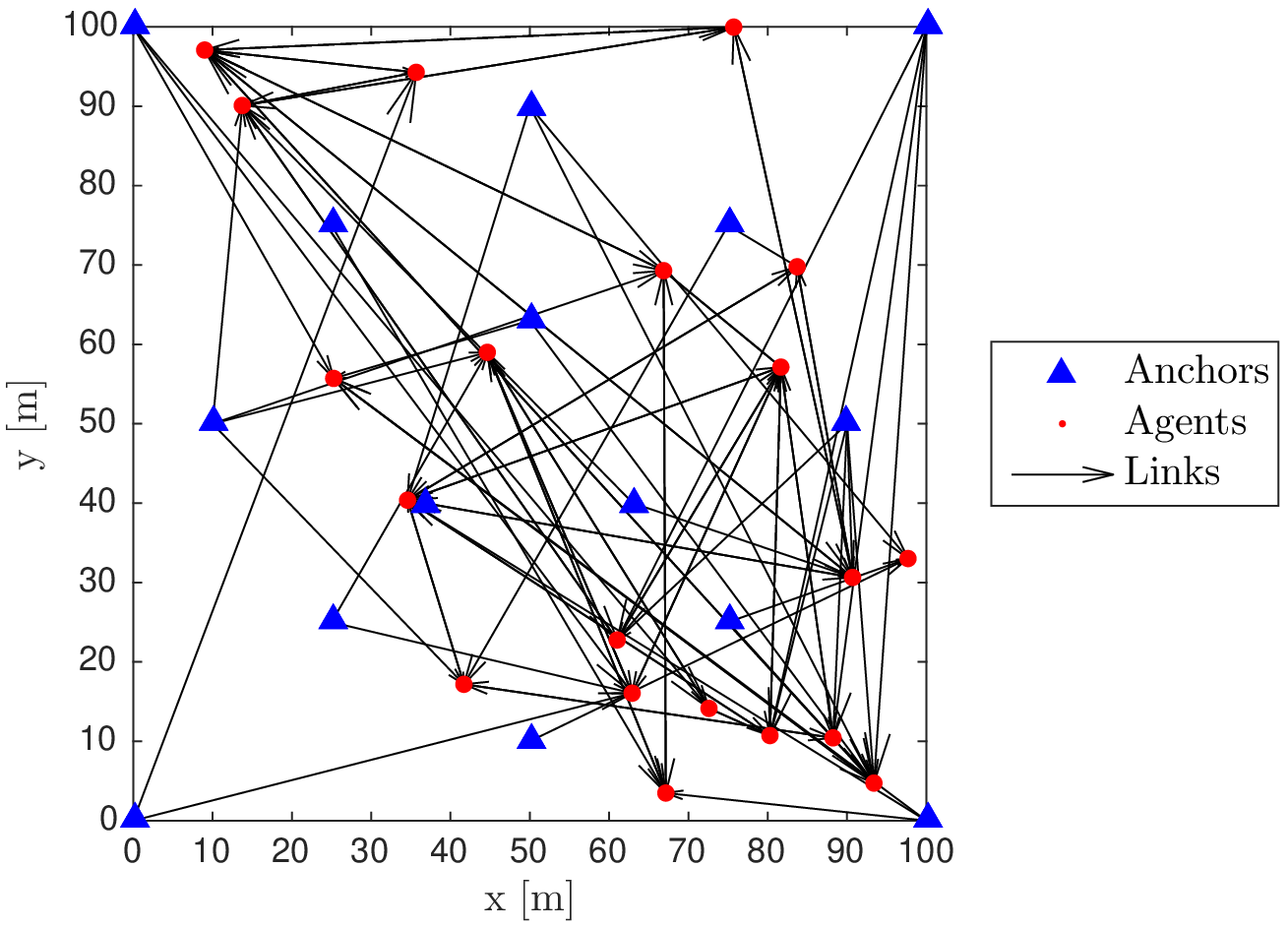}
\caption{Example of a compatible graph with $N_a=15, N_u=20$.}
\label{fig:example2}
\end{figure}
Figs.~\ref{fig:example1}-\ref{fig:example2} show two examples of RDML-initializable graphs. Fig.~\ref{fig:example1} illustrates the case of a small ($N_a=4, N_u=4$) but highly connected network. In contrast, Fig.~\ref{fig:example2} displays a larger ($N_a=15, N_u=20$), but weakly connected network (Graph Depth = 7). In particular, only $90$ out of $680$ possible directed links are connected, where anchors have $3$-$4$ links (out of $20$). For both cases, graph compatibility can be shown using Algorithm~\ref{algo:compat_test}.

\section{Results and Discussion}
In this section, we use standard Monte Carlo (MC) simulation to evaluate the performance of the proposed algorithm and its competitors. As a performance metric, we will show the ECDF (Empirical Cumulative Distribution Function) of the localization error, defined as $e_i \triangleq \| \hat{\bm{x}}_i - \bm{x}_i \|$ for agent $i$, i.e. an estimate of the probability $\mathcal{P}\{ \| \hat{\bm{x}}_i - \bm{x}_i \| \leq \gamma \}$. The ECDFs are obtained by stacking all the localization errors for every agent in a single vector, in order to give a global picture of the algorithm performances. The simulated scenario is as follows. In a square coverage area of $100 \times 100 \ \mathrm{m}^2$, $N_a = 11$ stationary anchors are deployed, as depicted in Fig.~\ref{fig:ancore}. The channel parameters are generated as follows: $p_{0_{\mbox{\tiny LOS}}} \sim \mathcal{U}[-30, 0]$, $\alpha_{\mbox{\tiny LOS}} \sim \mathcal{U}[2,4]$, $\sigma_{\mbox{\tiny LOS}} = 6$, while, for the NLoS case, $p_{0_{\mbox{\tiny NLOS}}} \sim \mathcal{N}(0, 25)$, $\alpha_{\mbox{\tiny NLOS}} \sim \mathcal{U}[3,6]$, $\sigma_{\mbox{\tiny NLOS}} = 12$. Similar settings on the reference power and the path loss exponent can be found in \cite{Mihaylova2011} and \cite{Beck2017}, respectively. At each MC trial, $N_u = 10$ agents are randomly (uniformly) generated in the coverage area. Unless stated otherwise, $K=40$ samples per link are used. Finally, each simulation consists of 100 MC trials.

The optimization problems~(\ref{eq:dmlcost}) and~(\ref{eq:robustcost}) have been solved as follows. For D-ML, a 2D \texttt{grid search} has been used, while, for RD-ML and C-MLE, the optimization has been performed with the \textsc{MATLAB} solver \texttt{fmincon}.

\begin{figure}[tp]
\centering
\includegraphics[scale=0.6]{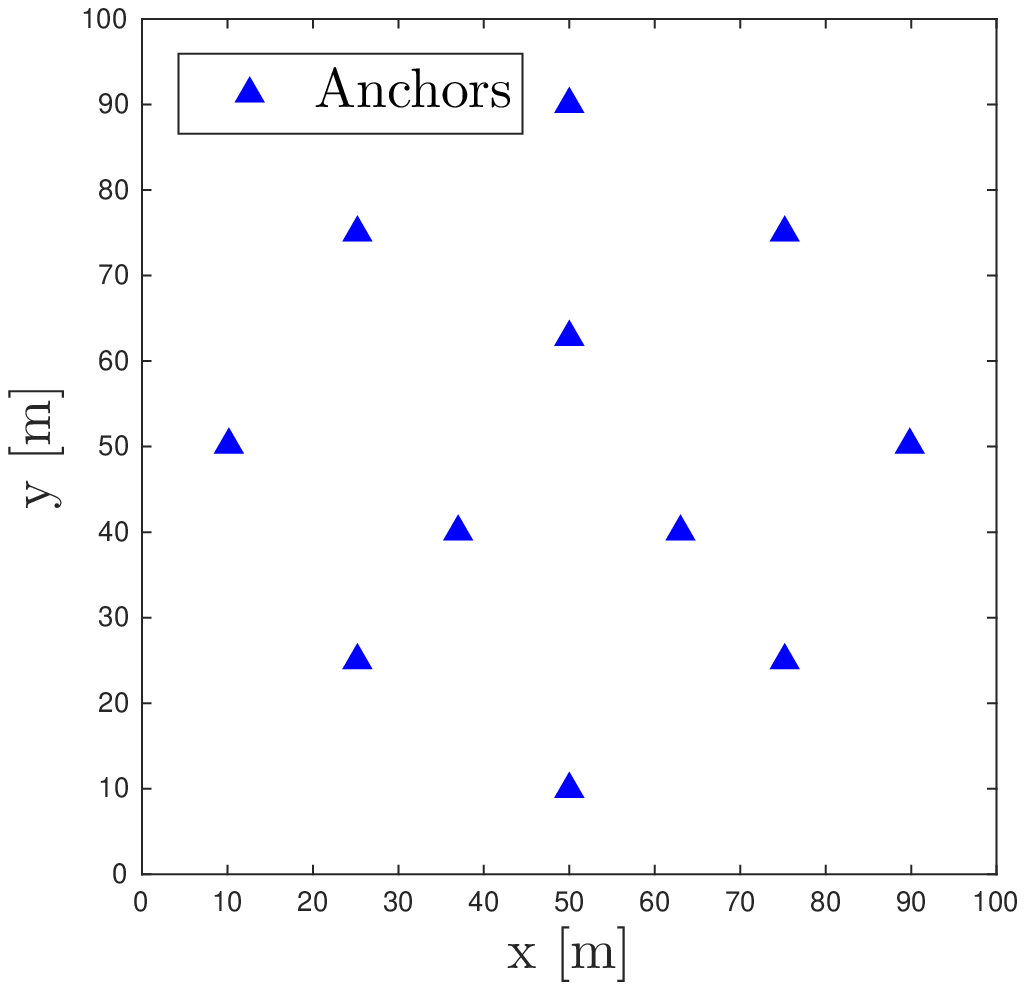}
\caption{Anchors positions used in all simulations.}
\label{fig:ancore}
\end{figure}

\subsection{Gaussian Noise}
Here, we validate our novel approach by considering fully-connected networks in three different scenarios. In Fig.~\ref{fig:los_iid}, all links are assumed to be LoS. As can be seen from the ECDFs, the robust approach has similar performances to the D-ML algorithm, which was designed assuming all links to be of the same type. In Fig.~\ref{fig:nlos_iid}, all links are assumed to be NLoS and again we see a similar result. The key difference is in Fig.~\ref{fig:mixed_iid}, where a mixed LoS/NLoS scenario is considered and the fraction of NLoS links is random (numerically generated as $\mathcal{U}[0,1]$ and different for each MC trial). Here, the robust approach outperforms D-ML, validating our strategy. Moreover, it has a remarkably close performance to the centralized algorithm C-MLE. As a result, the robust approach is the preferred option in all scenarios.

\begin{figure}[tp]
\centering
\includegraphics[scale=0.6]{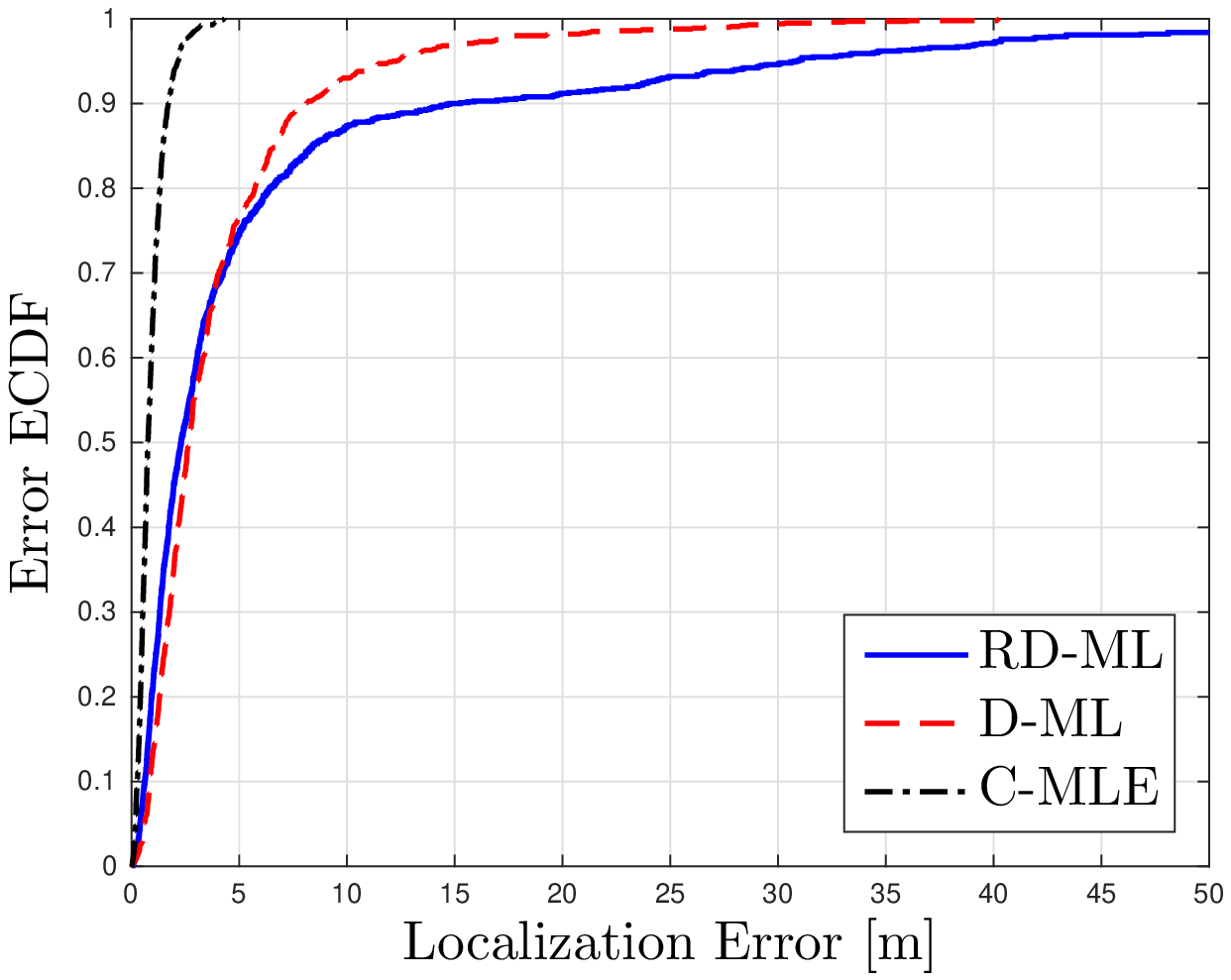}
\caption{All LoS links with white Gaussian noise. Fully-connected networks, $N_a=11, N_u=10$.}
\label{fig:los_iid}
\end{figure}

\begin{figure}[tp]
\centering
\includegraphics[scale=0.6]{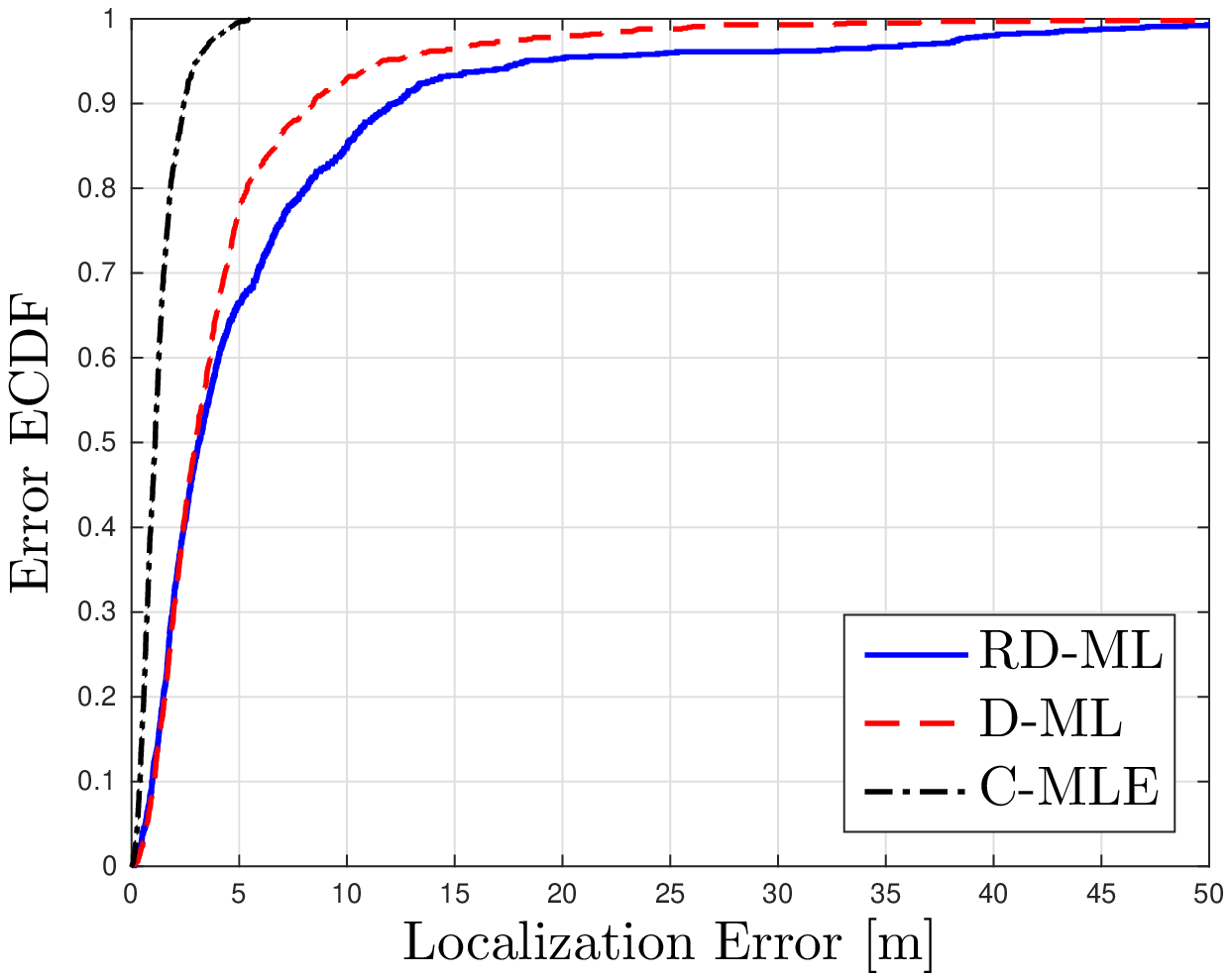}
\caption{All NLoS links with white Gaussian noise. Fully-connected networks, $N_a=11, N_u=10$.}
\label{fig:nlos_iid}
\end{figure}

\begin{figure}[tp]
\centering
\includegraphics[scale=0.6]{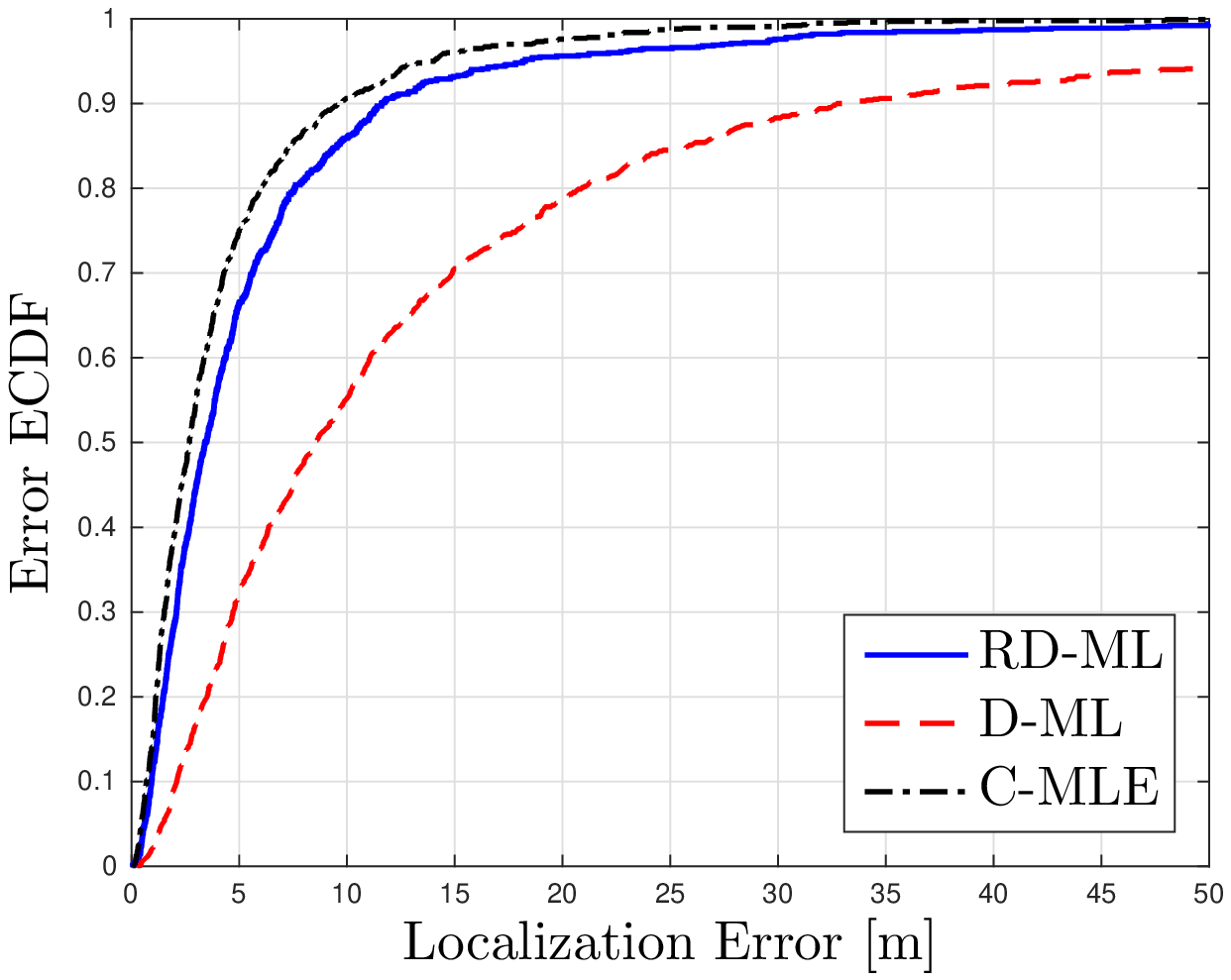}
\caption{Mixed LoS/NLoS scenario with white Gaussian noise. Fully-connected networks, randomized NLoS fraction, $N_a=11, N_u=10$.}
\label{fig:mixed_iid}
\end{figure}

\subsection{Non-Gaussian Noise for NLoS}
In order to evaluate robustness, we consider a model mismatch on the noise distribution by choosing, for the NLoS links, a Student's t-distribution with $\nu = 5$ degrees of freedom (as usual, serially i.i.d. and independent from the LoS noise sequence). For brevity, we consider only the case of a fully-connected network in a mixed LoS/NLoS scenario. As shown by Fig.~\ref{fig:tnoise}, our proposed approach is able to cope with non-Gaussianity without a significant performance loss, thanks to the Gaussian mixture model.

\begin{figure}[tp]
\centering
\includegraphics[scale=0.6]{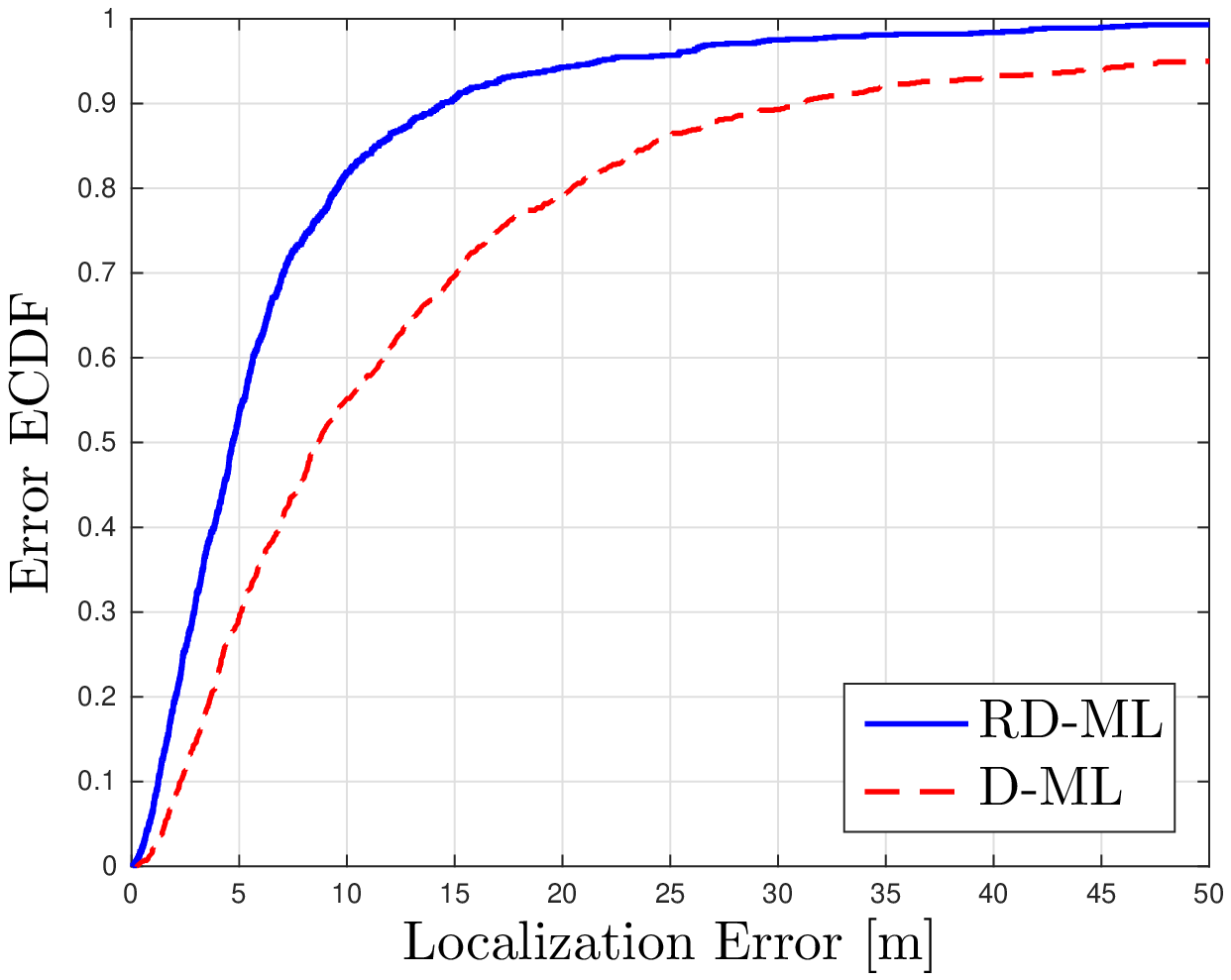}
\caption{Mixed LoS/NLoS scenario with Student's t-distributed noise with $\nu = 5$ under NLoS. Fully-connected networks, randomized NLoS fraction, $N_a=11, N_u=10$.}
\label{fig:tnoise}
\end{figure}

\subsection{Cooperative Gain}
The proposed algorithm exhibits the so-called ``cooperative gain'', i.e., a performance advantage (according to some performance metric, typically localization accuracy) with respect to a non-cooperative approach, where each node tries to localize itself only by self-localization~(\ref{eq:robustsc}). In our case, the cooperative gain is twofold: first, it allows to improve localization accuracy; second, it allows to localize otherwise non-localizable agents. To show the first point, we consider a mixed LoS/NLoS environment (for simplicity, with white Gaussian noise) and a network is generated at each Monte Carlo trial, assuming a communication radius $R = 70 \ \text{m}$, with an ideal probability of detection model~\cite{Handbook}. Moreover, the network is generated in a way as to allow all agents to self-localize and, as a consequence, it is RDML-compatible. In Fig.~\ref{fig:coop}, the ECDFs of the localization error are shown. To show the second point, we consider the toy network depicted in Fig.~\ref{fig:toynet}. In this example, agent $X$ is not able to self-localize, since no anchors are in its neighborhood, $\Gamma_a(X) = \emptyset$. Thus, in a non-cooperative approach, the position of agent $X$ cannot be uniquely determined\footnote{This easily follows by observing that the deterministic (noiseless) version of all the relevant equations for agent $X$ admit infinite solutions.}, while, in a cooperative approach, the position of agent $X$ can actually be obtained by exchanging information, e.g. the estimated positions of its neighbors.

\begin{figure}[tp]
\centering
\includegraphics[scale=0.6]{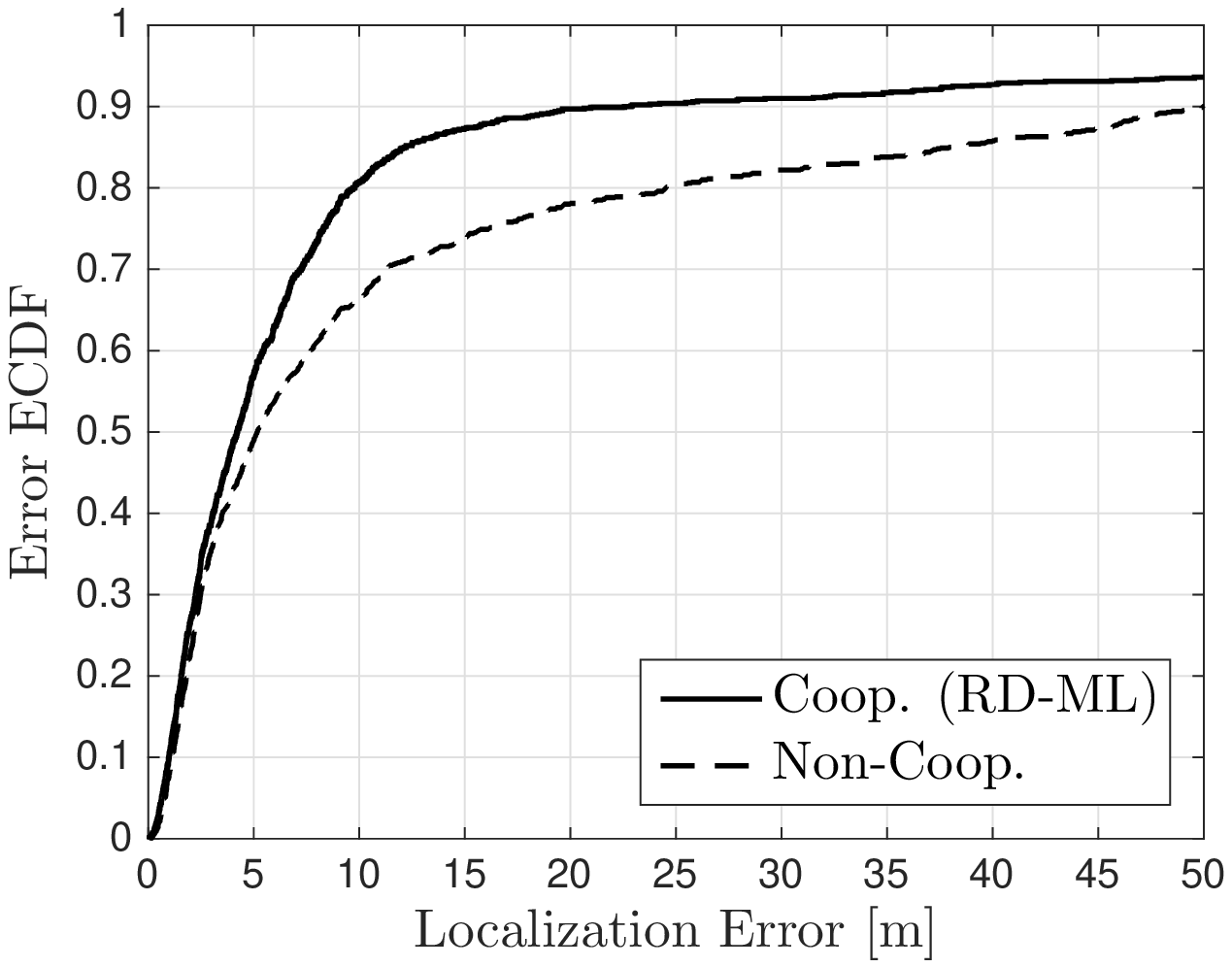}
\caption{Cooperative (RD-ML) and Non-Cooperative approaches in a mixed LoS/NLoS scenario, in a network with communication radius $R = 70 \ \text{m}$, $N_a=11, N_u=10$.}
\label{fig:coop}
\end{figure}

\begin{figure}[tp]
\centering
\includegraphics[scale=0.6]{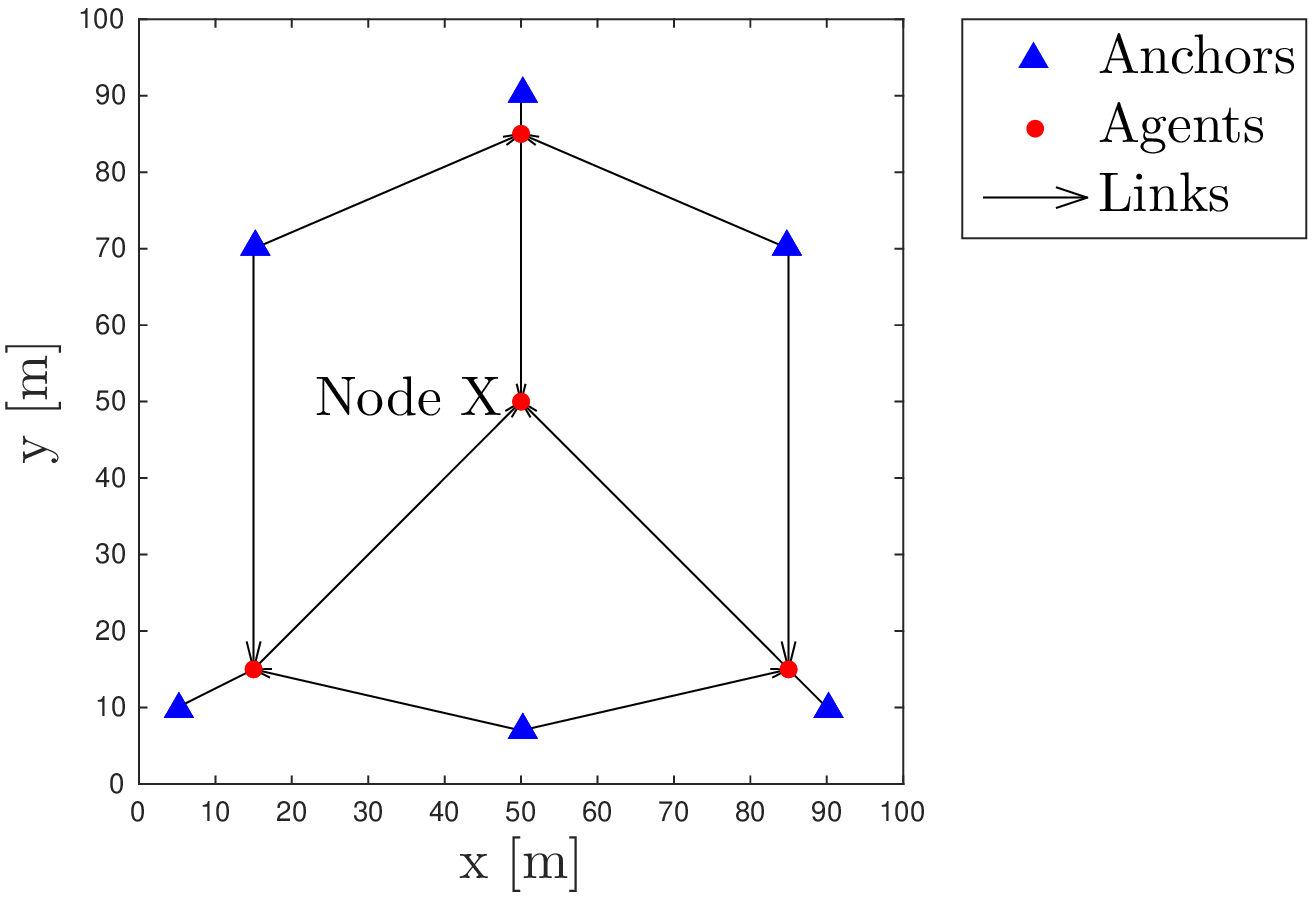}
\caption{Network of a toy example, with $N_a=6, N_u=4$.}
\label{fig:toynet}
\end{figure}

\subsection{Variable $K$ and NLoS fraction}
We next analyze the RD-ML algorithm by varying the number of samples per link, $K$, and the fraction of NLoS links. In both cases, we consider (for brevity) fully-connected networks with white Gaussian noise; for variable $K$, the fraction of NLoS is randomized, while, for variable NLoS fraction, $K$ is fixed. In Fig.~\ref{fig:varK}, we observe that the performance increases as $K$ increases, as expected. In Fig.~\ref{fig:varNLOS}, where each point represents 100 MC trials, the median error is chosen as a performance metric\footnote{This is due to the fact that RSME (Root Mean Square Error) is not a suitable metric when the Error ECDFs are very long-tailed, as in our case.} and RD-ML shows good performances and is not significantly affected by the actual NLoS fraction, which is evidence for its robustness, while D-ML is clearly inferior and suffers from model mismatch.

\begin{figure}[t]
\centering
\includegraphics[scale=0.6]{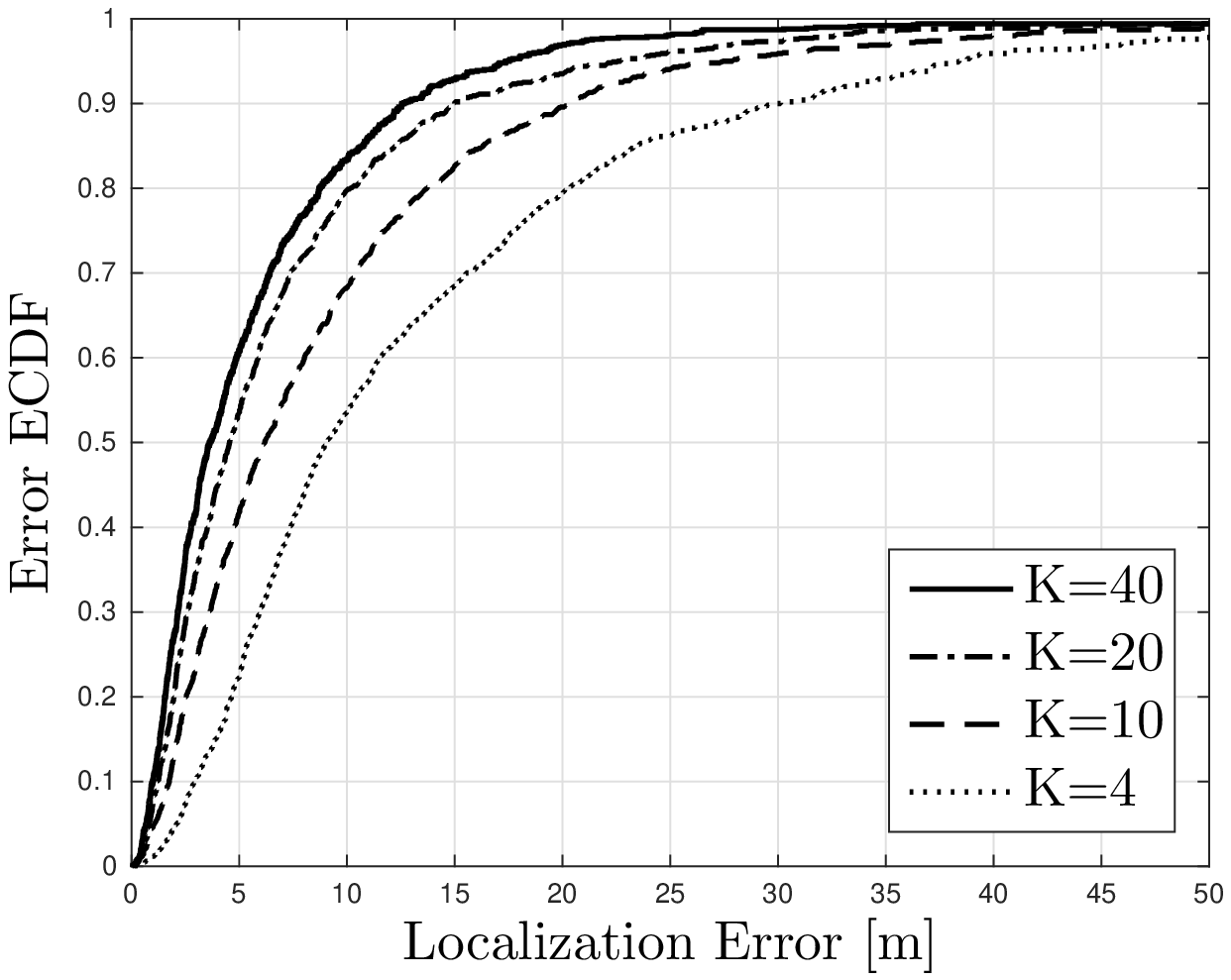}
\caption{Error ECDFs of RD-ML for various values of $K$ in fully-connected networks with white Gaussian noise and randomized NLoS fraction, $N_a=11, N_u=10$.}
\label{fig:varK}
\end{figure}
\begin{figure}[t]
\centering
\includegraphics[trim={15mm 0 0 0},clip, scale=0.63]{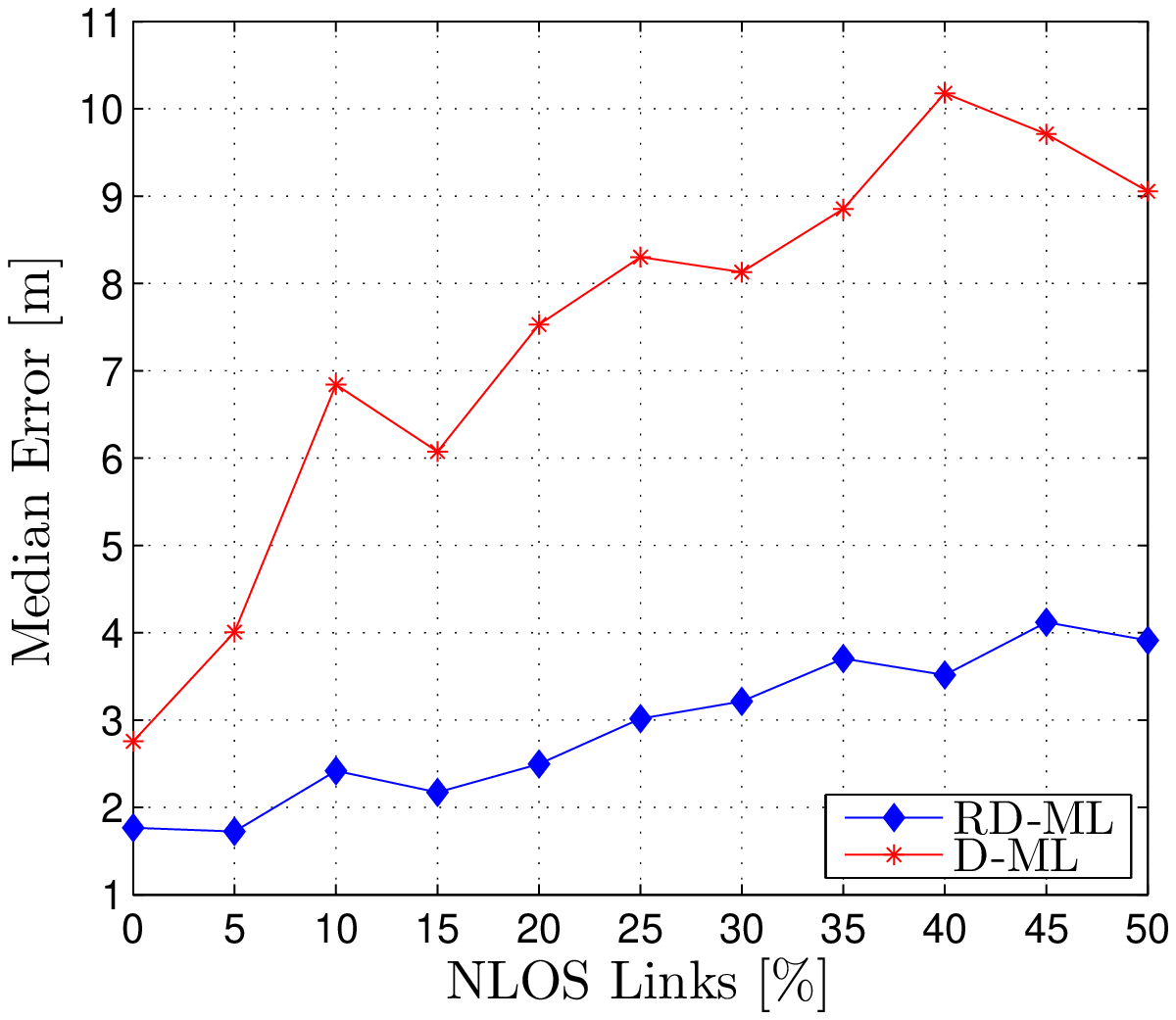}
\caption{Median error of RD-ML and D-ML for a variable fraction of NLoS links. Fully-connected networks, $K=40$, $N_a=11, N_u=10$.}
\label{fig:varNLOS}
\end{figure}

\subsection{Communication and Computational Costs}
The communication overhead is evaluated by computing the number of \emph{elementary messages} sent throughout the network, where an elementary message is simply defined as a single scalar. Sending a $d$-dimensional message, e.g.~$(x,y)$ coordinates, is equivalent to sending $d$ elementary messages. In Fig.~\ref{fig:com_cost}, the communication overhead of RD-ML and D-ML is plotted with respect to the number of agents, assuming a complete graph: for $N_u$ agents, $2 N_u (N_u -1)$ scalars, representing estimated 2D positions, are needed for RD-ML, while $4 N_u (N_u -1)$ are needed for D-ML, as estimates of $(p_0, \alpha)$ are also necessary. Viewing (2D) position has the fundamental information of a message ($d=2$), in a complete graph, RD-ML achieves the theoretical minimum cost in order for all nodes to have complete information, while D-ML uses auxiliary information. For a general graph, the communication overhead depends on the specific graph topology, but is never greater than the aforementioned value. Thus, for a general graph, the communication cost is upper-bounded by a quadratic function of $N_u$.

\begin{figure}[tp]
\centering
\includegraphics[scale=0.6]{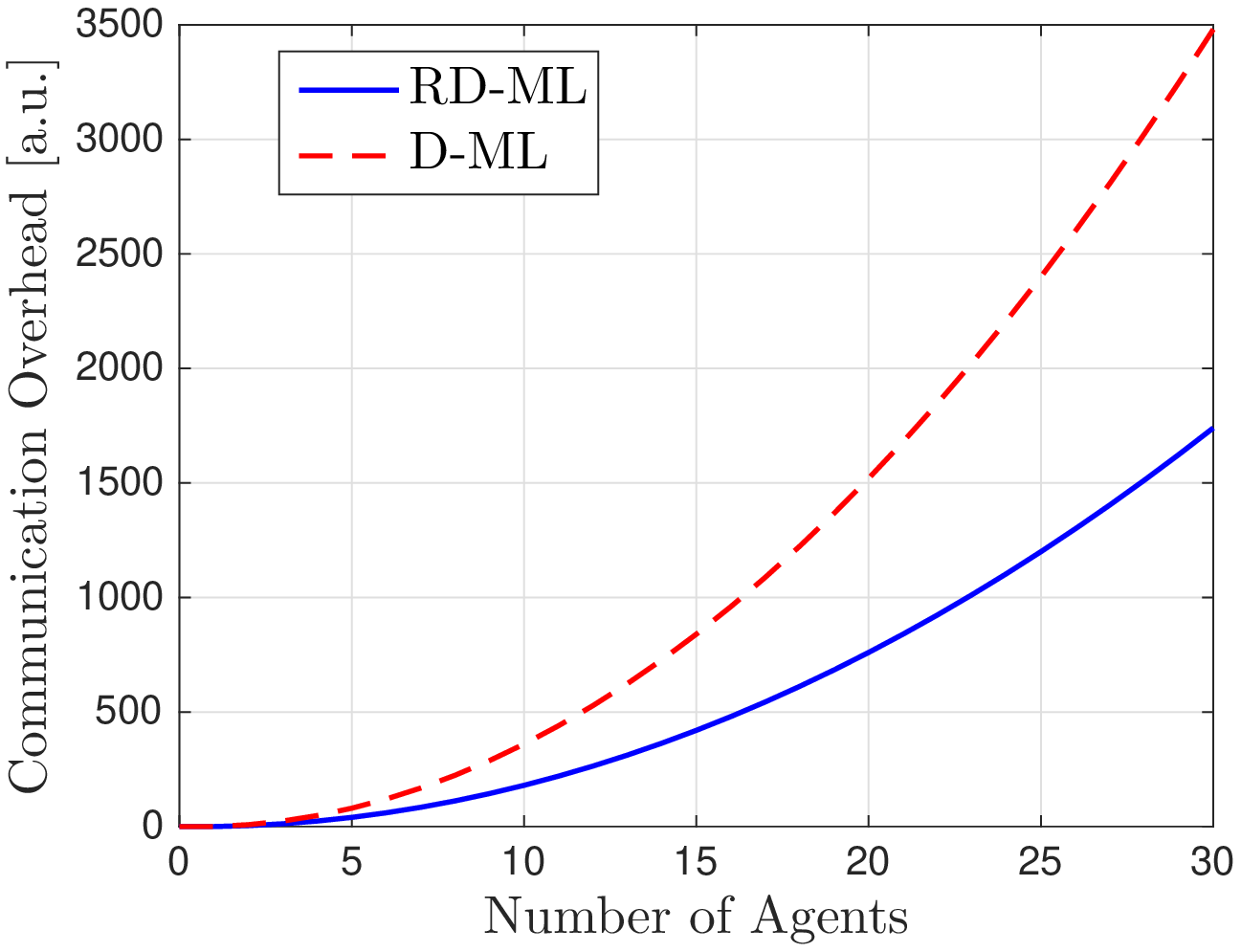}
\caption{Communication costs for a fully-connected network.}
\label{fig:com_cost}
\end{figure}

Regarding computational complexity, both RD-ML and D-ML scale linearly with $N_u$ and they benefit from parallelization, as the main optimization task can be executed independently for each involved node. As already mentioned, D-ML operates a trade-off between communication and computational complexity; in fact, the D-ML optimization problem~(\ref{eq:dmlcost}) is easier to solve than the RD-ML optimization problem~(\ref{eq:robustcost}). Both problems are non-convex and may have local minima/maxima, so care must be taken in the optimization procedure.

\section{Conclusions}
We have developed a novel, robust ML-based scheme for RSS-based distributed cooperative localization in mixed LoS/NLoS scenarios, which, while not being optimal, has good overall accuracy, is adaptive to the environment changes, is robust to NLoS propagation, including non-Gaussian noise, has communication overhead upper-bounded by a quadratic function of the number of agents and computational complexity scaling linearly with the number of agents, also benefiting from parallelization. The main original contributions are that, unlike many algorithms present in the literature, the proposed approach: (a) does not require calibration; (b) does not require symmetrical links (nor does it resort to symmetrization), thus naturally accounting for the topology of directed graphs with asymmetrical links as a result of miss-detections and channel anisotropies. To the best of the authors' knowledge, this is the first distributed cooperative RSS-based algorithm for \emph{directed} graphs. The main disadvantage is imposing some restrictions on the arbitrariness of networks' topology, but these restrictions disappear for sufficiently connected networks. We also derive a compatibility test based on graph coloring, which allows to determine whether the given network is compatible. If it is compatible, convergence of the algorithm is guaranteed. Future work may include the consideration of possible approximations, in order to extend this approach to more general networks, and of alternative models, overcoming the limitations of the standard path loss model.

\begin{appendix}
\section{On Using the Time-Averaged Sample Mean}
\label{sec:A}
Let $\Theta \subseteq \mathbb{R}^d$ be the (non-empty) parameter space, $\bm{\theta} \in \Theta$ unknown deterministic parameters and $s : \Theta \rightarrow \mathbb{R}$ a function. Consider the following model
\begin{equation}
\label{eq:proofmodel}
y(k) = s(\bm{\theta}) + w(k), \quad w(k) \sim \mathcal{N}(0, \sigma^2)
\end{equation}
where: $k=1,\dots,K$ and the noise sequence $w(k)$ is zero-mean, i.i.d.~Gaussian, with deterministic unknown variance $\sigma^2 > 0$. Let
\begin{equation}
\bar{y} = \frac{1}{K} \sum_{k=1}^K y(k)
\end{equation}
be the sample mean and let $p(y(1),\dots,y(k) ; \bm{\theta}, \sigma^2)$ denote the joint likelihood function. Then,
\begin{equation}
\label{eq:A}
\arg \max_{\bm{\theta} \in \Theta} p(y(1),\dots,y(k) ; \bm{\theta}, \sigma^2) = \arg \max_{\bm{\theta} \in \Theta} p(\bar{y} ; \bm{\theta}, \sigma^2)
\end{equation}
\begin{proof}{Since the observations are independent,
\begin{multline}
p(y(1),\dots,y(k) ; \bm{\theta}, \sigma^2) = \prod_{k=1}^K p(y(k) ; \bm{\theta}, \sigma^2) \\
= \frac{1}{(2 \pi \sigma^2)^{K/2}} e^{-\frac{1}{2 \sigma^2} \sum_{k=1}^K ( y(k) - s(\bm{\theta}) ) ^2 }
\end{multline}
We can now focus on the term in the exponential. By adding and subtracting $\bar{y}$ from the quadratic term and expanding, we get
\begin{multline}
Q = \sum_{k=1}^K ( y(k) - s(\bm{\theta}) ) ^2 = \sum_{k=1}^K ( (y(k) -\bar{y}) - ( s(\bm{\theta}) - \bar{y}) ) ^2 \\
\hspace{3mm}  = \sum_{k=1}^K (y(k) - \bar{y} )^2 + \sum_{k=1}^K ( s(\bm{\theta}) - \bar{y} )^2 - \\
2 \sum_{k=1}^K ( y(k) - \bar{y} )( s(\bm{\theta}) - \bar{y} )
\end{multline}
Observing now that
\begin{multline}
\sum_{k=1}^K ( y(k) - \bar{y} )( s(\bm{\theta}) - \bar{y} ) \\
=  \sum_{k=1}^K ( y(k) s(\bm{\theta}) - y(k) \bar{y} - \bar{y} s(\bm{\theta}) + \bar{y}^2 ) = 0
\end{multline}
we are left with
\begin{equation}
Q = \sum_{k=1}^K (y(k) - \bar{y} )^2 + \sum_{k=1}^K ( s(\bm{\theta}) - \bar{y} )^2
\end{equation}
Thus, the joint pdf can be factorized as follows
\begin{multline}
\hspace{-7mm}
p( y(1),\dots,y(k) ; \bm{\theta}, \sigma^2) = \\ 
\hspace{-12mm}
\underbrace{\frac{1}{(2 \pi \sigma^2)^{K/2}} e^{ -\frac{1}{2 \sigma^2} \sum_{k=1}^K (y(k) - \bar{y} )^2 }}_{ = h(y(1),\dots,y(k) ; \sigma^2)} 
\underbrace{e^{ -\frac{1}{2 \sigma^2} \sum_{k=1}^K ( s(\bm{\theta}) - \bar{y} )^2 }}_{ = g( \bar{y} ; \bm{\theta}, \sigma^2 ) } \hspace{-3mm}
\end{multline}
As a by-product, by invoking the Neyman-Fisher factorization theorem~\cite{Kay1} and assuming known $\sigma^2$, $\bar{y}$ is a sufficient statistic for $\bm{\theta}$. Resuming our proof, we can now observe that
\begin{equation}
\begin{gathered}
\arg \max_{\bm{\theta} \in \Theta} p(y(1),\dots,y(k) ; \bm{\theta}, \sigma^2) = \arg \max_{\bm{\theta} \in \Theta} g( \bar{y} ; \bm{\theta}, \sigma^2 ) \\
= \arg \max_{\bm{\theta} \in \Theta} \left\{ - \sum_{k=1}^K (s(\bm{\theta}) - \bar{y} )^2 \right\} = \arg \min_{\bm{\theta} \in \Theta} (\bar{y} - s(\bm{\theta}) )^2
\end{gathered}
\end{equation}
since no term in the summation depends on $k$. Thus,
\begin{equation}
\arg \max_{\bm{\theta} \in \Theta} p(y(1),\dots,y(k) ; \bm{\theta}, \sigma^2) = \arg \min_{\bm{\theta} \in \Theta} (\bar{y} - s(\bm{\theta}))^2
\end{equation}
On the other hand,
\begin{equation}
\label{eq:appmin}
\begin{aligned}
& \arg \max_{\bm{\theta} \in \Theta} p(\bar{y} ; \bm{\theta}, \sigma^2 ) \\
& = \arg \max_{\bm{\theta} \in \Theta} \frac{1}{\sqrt{2 \pi \frac{\sigma^2}{K}}} \exp \left\{ -\frac{K}{2 \sigma^2} (\bar{y} - s(\bm{\theta}))^2 \right\} \\
&= \arg \max_{\bm{\theta} \in \Theta} \left\{ -\frac{K}{2 \sigma^2} (\bar{y} - s(\bm{\theta}))^2 \right\} \\
&= \arg \min_{\bm{\theta} \in \Theta} (\bar{y} - s(\bm{\theta}) )^2
\end{aligned}
\end{equation}
which completes the proof.}
\end{proof}

\section{C-MLE Derivation}
\label{sec:cmle}
The MLE of $\bm{x}$ is given by
\begin{equation}
\hat{\bm{x}}_{ML} = \arg \max_{\bm{x}} p(\Upsilon ; \bm{x})
\end{equation}
where $p(\Upsilon; \bm{x})$ is the joint likelihood function. Since all other parameters are assumed known, the set of all time-averaged measures $\bar{\Upsilon} = \cup_{i=1}^{N_u} \bar{\mathcal{Y}_i}$ is a sufficient statistic for $\bm{x}$ (see Appendix \ref{sec:A}), from which it follows that
\begin{equation}
\arg \max_{\bm{x}} p(\Upsilon ; \bm{x}) = \arg \max_{\bm{x}} p(\bar{\Upsilon} ; \bm{x})
\end{equation}
By link independence,
\begin{equation}
p( \bar{\Upsilon} ; \bm{x}) = \prod_{i=1}^{N_u} \prod_{j \in \Gamma(i)}
p( \bar{r}_{j \rightarrow i} ; \bm{x})
\end{equation}
where $p(\bar{r}_{j \rightarrow i} ; \bm{x})$ is the marginal likelihood function. Thus, we have
\begin{multline}
\hat{\bm{x}}_{ML} = \arg \max_{\bm{x}} \prod_{i=1}^{N_u} \prod_{j \in \Gamma(i)} \frac{1}{(2 \pi \sigma^2_{j \rightarrow i})} \times \\
e^{ -\frac{1}{2 \sigma^2_{j \rightarrow i}} ( \bar{r}_{j \rightarrow i} -p_{0_{j\rightarrow i}} +10 \alpha_{j \rightarrow i} \log_{10} \| \bm{x}_j - \bm{x}_i  \| )^2 }
\end{multline}
Taking the natural logarithm and neglecting constants,
\begin{multline}
\hat{\bm{x}}_{ML} = \arg \min_{\bm{x}} \sum_{i=1}^{N_u} \sum_{j \in \Gamma(i)} \frac{1}{\sigma^2_{j \rightarrow i}} ( \bar{r}_{j \rightarrow i} -p_{0_{j\rightarrow i}} +\\
10 \alpha_{j \rightarrow i} \log_{10} \| \bm{x}_j - \bm{x}_i  \| )^2
\end{multline}
which completes the derivation.
\end{appendix}


\begin{backmatter}

\section*{List of Abbreviations}
\begin{abbrv}
\item[BP]					 Belief Propagation
\item[C-MLE]				 Centralized Maximum likelihood Estimator
\item[D-ECM]				 Distributed Expectation-Conditional Maximization
\item[D-ML]					 Distributed Maximum Likelihood
\item[ECDF]					 Empirical Cumulative Distribution Function
\item[ECM]					 Expectation-Conditional Maximization
\item[EM]					 Expectation-Maximization
\item[LoS]                   Line-of-Sight
\item[LS]					 Least Squares
\item[MC]					 Monte Carlo
\item[ML]					 Maximum Likelihood
\item[MLE]                   Maximum Likelihood Estimator
\item[NBP]					 Nonparametric Belief Propagation
\item[NLoS]					 Non-Line-of-Sight
\item[RD-ML]				 Robust Distributed Maximum Likelihood
\item[RF]					 Radiofrequency
\item[RMSE]                  Root Mean Square Error
\item[RSS]					 Received Signal Strength
\item[RSSI]					 Received Signal Strength Indicator
\item[SDP]					 Semidefinite-Programming
\item[SPAWN]				 Sum-Product Algorithm over Wireless Networks
\item[WLS]					 Weighted Least Squares
\item[WSN]					 Wireless Sensor Network
\item[2D]					 2-dimensional
\end{abbrv}

\section*{Declarations}

\section*{Availability of data and material}
The datasets used and/or analyzed during the current study are available from the corresponding author on reasonable request.

\section*{Competing interests}
The authors declare that they have no competing interests.

\section*{Funding}
The work of L. Carlino was supported by the "Erasmus+ Traineeship" programme of University of Salento. 
The work of M. Muma was supported by the "Athene Young Investigator Programme" of Technische Universität Darmstadt.

\section*{Authors contributions}
LC has contributed towards the development of the proposed algorithms and the performance analysis. DJ has contributed towards the introduction, the related work and with other minor revisions throughout the paper. MM has contributed towards the example networks regarding graph connectivity, the overall organization of the paper, and with other minor revisions throughout the paper. As the supervisor, AMZ has proofread the paper several times and provided guidance throughout the whole preparation of the manuscript. All authors read and approved the final manuscript.

\section*{Acknowledgements}
The authors would like to thank prof.~F.~Bandiera from University of Salento for having started the collaboration which lead to this work.

\section*{Authors information}
\textbf{LC} received his B.Sc. degree with full marks in Information Engineering at University of Salento, Lecce, Italy, in 2012. He received his M.Sc. degree in Telecommunications Engineering (\emph{summa cum laude}) at University of Salento, in 2014. He started his Ph.D. studies at University of Salento in 2014 and finished his Ph.D. in 2018. His main area of research is signal processing, while his research topic is applying signal processing techniques to the problem of localization, with a specific focus on localization based upon received signal strength. His other research interests include radar applications, target tracking, data mining and information theory.

\textbf{DJ} received the B.Sc. degree in information and communication engineering from Zhejiang University, Hangzhou, China, in 2011, and the M.Sc. degree in electrical engineering and information technology from Technische Universit$\rm{\ddot{a}}$t Darmstadt, Darmstadt, Germany, in 2014. She is currently working towards the Ph.D. degree in the Signal Processing Group at Technische Universit$\rm{\ddot{a}}$t Darmstadt. Her research interests include localization and tracking, distributed and cooperative inference in wireless networks.

\textbf{MM} received the Dipl.-Ing (2009) and the Dr.-Ing. degree (\emph{summa cum laude}) in Electrical Engineering and Information Technology (2014), both from Technische Universit\"{a}t Darmstadt, Darmstadt, Germany. He completed his diploma thesis with the Contact Lens and Visual Optics Laboratory, School of Optometry, Brisbane, Australia, on the role of cardiopulmonary signals in the dynamics of the eye's wavefront aberrations. Currently, he is a Postdoctoral Fellow at the Signal Processing Group, Institute of Telecommunications and has recently been awarded Athene Young Investigator of Technische Universit\"{a}t Darmstadt. His research is on robust statistics for signal processing with applications in biomedical signal processing, wireless sensor networks, and array signal processing. MM was the supervisor of the Technische Universit{\"a}t Darmstadt student team who won the international IEEE Signal Processing Cup 2015. MM co-organized the 2016 Joint IEEE SPS and EURASIP Summer School on Robust Signal Processing. In 2017, together with his coauthors, MM received the IEEE Signal Processing Magazine Best Paper Award for the paper entitled ``Robust Estimation in Signal Processing: A tutorial-style treatment of fundamental concepts".  In 2017, he was elected to the European Association for Signal Processing (EURASIP) Special Area Team in Theoretical and Methodological Trends in Signal Processing (SAT-TMSP).

\textbf{AZ} is a Fellow of the IEEE and IEEE Distinguished Lecturer (Class 2010- 2011). He received his Dr.-Ing. from Ruhr-Universit\"at Bochum, Germany in 1992. He was with Queensland University of Technology, Australia from 1992-1998 where he was Associate Professor. In 1999, he joined Curtin University of Technology, Australia as a Professor of Telecommunications. In 2003, he moved to Technische Universit\"at Darmstadt, Germany as Professor of Signal Processing and Head of the Signal Processing Group. His research interest lies in statistical methods for signal processing with emphasis on bootstrap techniques, robust detection and estimation and array processing applied to telecommunications, radar, sonar, automotive monitoring and safety, and biomedicine. He published over 400 journal and conference papers on the above areas. AZ served as General Chair and Technical Chair of numerous international conferences and workshops; more recently he was the Technical Co-Chair of ICASSP-14 held in Florence, Italy. He also served on publication boards of various journals, notably as Editor-In-Chief of the IEEE Signal Processing Magazine (2012-2014). AZ was the Chair (2010-2011) of the IEEE Signal Processing Society (SPS) Technical Committee Signal Processing Theory and Methods (SPTM). He served on the Board of Governors of the IEEE SPS (2015-2017) and is the president of the European Association of Signal Processing (EURASIP) (2017-2018).



\bibliographystyle{bmc-mathphys}
\bibliography{biblio}

\end{backmatter}
\end{document}